\newcommand{\e}{\begin{equation}}
\newcommand{\ee}{\end{equation}}
\newcommand{\en}{\begin{equation*}}
\newcommand{\een}{\end{equation*}}
\newcommand{\eqn}{\begin{eqnarray}}
\newcommand{\eeqn}{\end{eqnarray}}
\newcommand{\bmat}{\begin{bmatrix}}
\newcommand{\emat}{\end{bmatrix}}
\newcommand{\BIT}{\begin{itemize}}
\newcommand{\EIT}{\end{itemize}}
\newcommand{\eg}{{$e.g.,$}}
\newcommand{\ie}{{$i.e.,$}}
\newcommand{\jj}{\mathrm{j}}
\newcommand{\diag}{\mathop{\bf diag}}
\newcommand{\dist}{\mathop{\bf dist{}}}
\newcommand{\vb}{\mathbf b}
\newcommand{\vd}{\mathbf d}
\newcommand{\ve}{\mathbf e}
\newcommand{\vf}{\mathbf{f}}
\newcommand{\vh}{\mathbf h}
\newcommand{\vk}{\mathbf k}
\newcommand{\vp}{\mathbf p}
\newcommand{\vr}{\mathbf r}
\newcommand{\vs}{\mathbf s}
\newcommand{\vt}{\mathbf t}
\newcommand{\vu}{\mathbf{u}}
\newcommand{\vv}{\mathbf v}
\newcommand{\vw}{\mathbf w}
\newcommand{\vx}{\mathbf x}
\newcommand{\vy}{\mathbf y}
\newcommand{\vz}{\mathbf z}
\newcommand{\vtheta}{\bm \theta}
\newcommand{\mA}{\mathbf A}
\newcommand{\mC}{\mathbf C}
\newcommand{\mD}{\mathbf D}
\newcommand{\mG}{\mathbf G}
\newcommand{\mH}{\mathbf H}
\newcommand{\mI}{\mathbf I}
\newcommand{\mJ}{\mathbf J}
\newcommand{\mK}{\mathbf K}
\newcommand{\mP}{\mathbf P}
\newcommand{\mR}{\mathbf R}
\newcommand{\mS}{\mathbf{S}}
\DeclareOldFontCommand{\rm}{\normalfont\rmfamily}{\mathrm}
\DeclareOldFontCommand{\sf}{\normalfont\sffamily}{\mathsf}
\DeclareOldFontCommand{\tt}{\normalfont\ttfamily}{\mathtt}
\DeclareOldFontCommand{\bf}{\normalfont\bfseries}{\mathbf}
\DeclareOldFontCommand{\it}{\normalfont\itshape}{\mathit}
\DeclareOldFontCommand{\sl}{\normalfont\slshape}{\@nomath\sl}
\DeclareOldFontCommand{\sc}{\normalfont\scshape}{\@nomath\sc}
\newcounter{oursection}
\Crefname{figure}{Fig.}{Figs.}
\definecolor{goldenrod}{rgb}{1,0.87,0.26}
\newtheorem{prop}{Proposition}
\definecolor{darkred}{rgb}{0.6,0,0}
\definecolor{darkgreen}{rgb}{0,0.5,0}
\definecolor{darkblue}{rgb}{0,0,0.5}
\pgfplotsset{compat=1.5.1}
\tikzset{snake it/.style={decorate, decoration=snake}}
\newcommand{\tikzcircle}[2][red,fill=red]{\tikz[baseline=-0.5ex]\draw[#1,radius=#2] (0,0) circle ;}%
\DeclareRobustCommand{\tikzarr}[4]{\tikz[baseline=0ex]\draw[-{Latex}] (#1,#2) -- (#3,#4);}
\newcommand{\mybox}{%
    \collectbox{%
        \setlength{\fboxsep}{1.5pt}%
        \fbox{\BOXCONTENT}%
    }%
}
\begin{document}
%
\title{Diffraction Tomography with Helmholtz Equation: Efficient and Robust Multigrid-Based Solver}
%
%
%
\author{
Tao Hong, Thanh-an Pham, Eran Treister, and Michael Unser 
\thanks{T. Hong and T.-a. Pham contributed equally.}
\thanks{This work was supported in part by the European Research Council
(ERC) under the European Union’s Horizon 2020 research and innovation programme, Grant Agreement 692726 GlobalBioIm: Global integrative framework
for computational bio-imaging and in part by the Israel Science Foundation (grant No. 1589/19).}
\thanks{T. Hong is with the Department of Computer Science, Technion-Israel Institute of Technology, 3200003 Haifa, Israel. (Email: {hongtao@cs.technion.ac.il}).}
\thanks{T.-a. Pham and M. Unser are with Biomedical Imaging Group, \'{E}cole poly- technique f\'{e}d\'{e}rale de Lausanne, 1015 Lausanne, Switzerland (Email: \{thanh- an.pham,michael.unser\}@epfl.ch).}
\thanks{E. Treister is with the Department of
Computer Science at the Ben-Gurion University of the Negev, 8410501 Be'er
Sheva, Israel. (Email: {erant@cs.bgu.ac.il}).}
}

%
%

\markboth{}
{Shell \MakeLowercase{\textit{et al.}}: Bare Demo of IEEEtran.cls for IEEE Journals}
%



\maketitle

\begin{abstract}
Diffraction tomography is a noninvasive technique that estimates the refractive indices of unknown objects and involves an inverse-scattering problem governed by the wave equation.
Recent works have shown the benefit of nonlinear models of wave propagation that account for multiple scattering and reflections.
In particular, the Lippmann-Schwinger~(LiS) model defines an inverse problem to simulate the wave propagation.
Although accurate, this model is hard to solve when the samples are highly contrasted or have a large physical size.
In this work, we introduce instead a Helmholtz-based nonlinear model for inverse scattering.
To solve the corresponding inverse problem, we propose a robust and efficient multigrid-based solver.
Moreover, we show that our method is a suitable alternative to the LiS model,
especially for strongly scattering objects.
Numerical experiments on simulated and real data demonstrate the effectiveness of the Helmholtz model, as well as the efficiency of the proposed multigrid method.
\end{abstract}
\begin{IEEEkeywords}
Multiple scattering, nonlinear inverse problems, Lippmann-Schwinger.
\end{IEEEkeywords}

%
\IEEEpeerreviewmaketitle

\section{Introduction}
\IEEEPARstart{T}{he} purpose of diffraction tomography~(DT) is to recover the refractive-index~(RI) map of an object in a noninvasive manner~\cite{wolf1969three}.
The sample is probed with a series of tilted incident waves, while the resulting complex-valued scattered waves are recorded for each illumination~\cite{jin2017tomographic}.
From these measurements, one reconstructs the RI map by solving an inverse-scattering problem.
The quality of the reconstruction depends on the angular diversity and the accuracy of the forward imaging model.
When the illumination is a time-harmonic field, the wave propagation through the sample is governed by the Helmholtz equation under the scalar-diffraction theory.
To simplify the reconstruction problem, pioneering works used a linear model to approximate the physical process.
For instance, the Born~\cite{wolf1969three} and Rytov~\cite{devaney1981inverse} approximations are mainly valid for weakly scattering samples.
Recent studies showed that regularization techniques could improve the quality of reconstruction and counteract the presence of noise and the missing-cone problem~\cite{sung2009optical,lim2015comparative,yang2020deep}.
Moreover, 
nonlinear models that are able to properly account for multiple scattering as well as reflections could also improve the quality of reconstruction~\cite{dubois2005retrieval,chaumet2009three,mudry2012electromagnetic,kamilov2015learning,kamilov2016optical,lim2018learning, lim2019high,liu2017seagle,soubies2017efficient,pham2020three,kadu2020high}.

In particular, recent works relied on the Lippmann-Schwinger~(LiS) equation---an integral formulation of the Helmholtz equation---to design an accurate nonlinear forward model~\cite{liu2017seagle,soubies2017efficient,pham2020three}.
A similar model, called the discrete-dipole approximation, can additionally account for polarization~\cite{draine1994discrete,girard2010nanometric,zhang2016far}.
However, when facing particularly strongly scattering samples, we observed that the recent solvers \cite{liu2017seagle,soubies2017efficient,pham2020three} for the LiS model suffer from slow convergence.
Furthermore, the LiS model involves a discrete convolution over a domain larger than the one of interest, which increases the overall computational burden. One may use well-designed preconditioners for the LiS equation \cite{ying2015sparsifying,liu2018sparsify}, but these methods are memory-consuming and need a significant setup time, which hinders their application to inverse-scattering problem.

In this work, we are interested in solving the inverse-scattering problem of strongly scattering objects.
To that end, we introduce a nonlinear imaging model that is based directly on the Helmholtz equation and that relies on a robust and efficient multigrid~(MG) solver.
We show that our method is as accurate as the LiS model while remaining efficient even for strongly scattering samples.
Similarly to the approach developed in \cite{soubies2017efficient}, we also provide an explicit expression of the Jacobian matrix of our model to easily evaluate the gradient in the data-fidelity term.
Our numerical experiments show that the proposed MG solver accurately resolves challenging inverse-scattering problems, while mitigating the prohibitive computational cost of the LiS model.
\subsection{Outline}
The rest of the paper is organized as follows: In \Cref{sec:PhysicalModel}, we introduce the physical model of diffraction tomography and review the LiS model.
In \Cref{sec:MGSolverHelmholtz}, we present the proposed MG-based solver.
In \Cref{sec:RecProblemFormulationandOptimization}, we formulate the inverse-scattering problem and propose an optimization algorithm to solve it.
In \Cref{sec:Exps},
we study the robustness and efficiency of the proposed MG method with numerical experiments on simulated and real data.

\subsection{Notations}
Scalar and continuously defined functions are denoted by italic letter~(\eg{} $\eta_\mathrm{b}\in\mathbb{R}_{\geq 0}$, $f \in L_2$). Vectors and matrices are denoted by bold lowercase and uppercase letters, respectively~(\eg{}~$\vv\in\mathbb{R}^N,\mA\in\mathbb{C}^{N\times N}$); $\|\vv\|$ stands for the $\ell_2$-norm of $\vv\in\mathbb{R}^N$ and $\langle\vv_1,\vv_2\rangle$ denotes the inner products between the vectors $\vv_1,\vv_2\in\mathbb{R}^N$.
The imaginary unit $\jj$ is such that $\jj^2=-1$ and the real part of a complex number~$z$ by $\Re(z)$.
The diagonal matrix~$\diag(\vv)\in\mathbb{R}^{N\times N}$ is formed out of the entries of~$\vv$.
For a matrix~$\mA\in\mathbb{R}^{N\times N}$, $\diag(\mA)\in\mathbb{R}^N$ denotes the diagonal of $\mA$.
The matrix~$\mI_N\in\mathbb{R}^{N\times N}$ is the identity.
%
\begin{figure}
\centering
\def\dist{1.67}
\def\emsiz{1pt}
\def\anguin{45}
\def\nwaveuin{4}
\def\percdist{0.1}
\def\angus{120}
\def\distsensors{0.125}
\def\baseshift{0.35}
\newcommand{\view}[1]{E_{#1}}
\newcommand{\sensors}[1]{S_{#1}}
\begin{tikzpicture}[scale = 1]
\draw[lightgray] (0,0) circle (\dist);
\foreach \ang in {-120,-119,...,120} {
\draw[red,fill=red] (\ang:\dist) circle (0.25pt);%
}
\foreach[evaluate = \ang as \i using int(\ang/45 + 5) ] \ang in {-180,-135,...,135} {
\draw[black,fill=darkgreen, shift = {(\ang:\dist)}] ({- \emsiz/2},{- \emsiz/2}) rectangle ({\emsiz/2},{\emsiz/2});
\node[shift = (\ang:7pt)] at (\ang:\dist) {$\view{\i}$};
}
\foreach \pos in {1,2,...,\nwaveuin} {
\draw[darkgreen] ({-\dist+\percdist*\dist*\pos},{-\percdist*\dist*\pos*sin(\anguin)}) arc[start angle=-\anguin,end angle=\anguin, radius = \percdist*\dist*\pos]; 
rad}
\node[fill=white,circle,inner sep=0pt] at (-0.675*\dist,0) {$\vu^\mathrm{in}_1$};
\draw[purple,snake it] ({0.5*\dist*cos(-\angus)},{0.5*\dist*sin(-\angus)}) arc[start angle = {-\angus},end angle = {\angus}, radius = {0.5*\dist}];
\draw[purple,snake it] ({0.7*\dist*cos(-\angus)},{0.7*\dist*sin(-\angus)}) arc[start angle = {-\angus},end angle = {\angus}, radius = {0.7*\dist}];
\node[fill=white,circle,inner sep=0pt] at (0.55*\dist,0) {$\vu_1$}; 
\fill [scale = 3,darkblue] (0,0) circle(0.08/1.67*\dist);
\fill [scale = 3,goldenrod] ({(0.08+0.031)/1.67*\dist},0) circle(0.031/1.67*\dist);
\draw[black] (-3*0.15/1.67*\dist, -3*0.15/1.67*\dist) rectangle (3*0.15/1.67*\dist,3*0.15/1.67*\dist);
\node at (-1.83*0.15/1.67*\dist,2.1*0.15/1.67*\dist) {\footnotesize$\eta_\mathrm{b}$};
\node [align=center,anchor=north,fill=white] at (0,-3*0.15/1.67*\dist) {$\Omega$};
\end{tikzpicture}
    \caption{Acquisition setup of diffraction tomography. The sensors (small round on the circle) collect the illumination from~$E_1$. In this example, $8$~views are acquired.}
 \label{fig:ODTRefractive}
\end{figure}

\section{Physical Model} \label{sec:PhysicalModel}
\subsection{Continuous-Domain Formulation}
\label{sec:PhysicalModel:sub:Continuous}
Let us consider an object of RI map $\eta: \Omega \rightarrow \mathbb{R}$ over some spatial domain $\Omega\subset \mathbb{R}^d$ ($d=2,3$).
The object is immersed in a medium of RI $\eta_\mathrm{b}$ and is illuminated by a plane wave (\Cref{fig:ODTRefractive})
\begin{equation}
    u^\mathrm{in}(\vx,t)=\Re\left(u_0\mathrm{e}^{\jj \langle\vk, \vx\rangle-\jj \omega t}\right),
\end{equation}
where~$\vk,\vx\in \mathbb R^d$, $\omega\in\mathbb{R}$,  $u_0\in \mathbb C$, and $t\in\mathbb{R}$ denote the wave vector, the spatial coordinates, the angular pulsation, the complex envelope, and the time, respectively.
Since the incident field is a time-harmonic wave,
the time-independent total field~$u(\vx)$ at location~$\mathbf{x}$ is well described by the inhomogeneous Helmholtz equation~\cite{kamilov2016optical,soubies2017efficient}
\begin{equation}
    \nabla^2u(\vx)+k_0^2\eta^2(\vx) u(\vx)=0\label{eq:HelmTotal},
\end{equation}
where $k_0=\omega/c$ is the wave number in free space and \mbox{$c\approx 3\times 10^8 \mathrm{m}/\mathrm{s}$} the velocity of light.
Denote by \mbox{$u^\mathrm{in}(\vx) = u_0 \mathrm{e}^{\jj\langle\vk, \vx\rangle}$} the incident wave in space and $u^{\mathrm{sc}}(\vx) = (u(\vx) - u^\mathrm{in}(\vx))$ the scattered wave field.
Note that $u^\mathrm{in}(\vx)$ is a solution of the homogeneous Helmholtz equation
\mbox{$\nabla^2 u^\mathrm{in}(\vx)+k_0^2\eta_\mathrm{b}^2u^\mathrm{in}(\vx)=0$}. Then, \eqref{eq:HelmTotal} reads
\begin{equation}
    -\nabla^2u^{\mathrm{sc}}(\vx)-k_0^2\eta^2(\vx)u^{\mathrm{sc}}(\vx)=f(\vx)u^\mathrm{in}(\vx)\label{eq:HelmScatterRightIncident},
\end{equation}
where $f(\vx)=k_0^2(\eta^2(\vx)-\eta_\mathrm{b}^2)$ is the scattering potential function, which is the quantity that we wish to recover.

Equivalently, the integral form of \eqref{eq:HelmTotal} is known as the Lippmann-Schwinger equation and describes the wave propagation~\cite{liu2017seagle,soubies2017efficient,pham2020three} as
\begin{equation}
    u(\vx)=u^\mathrm{in}(\vx)+\int_\Omega g(\vx-\vz)f(\vz)u(\vz)d\vz\label{eq:LSIntegral}.
\end{equation}
With the assumption of Sommerfeld's radiation condition~\cite{sommerfeld1949partial}, the Green's function $g(\vx):\mathbb{R}^d\rightarrow \mathbb{C}$ in \eqref{eq:LSIntegral} is defined as~\cite{schmalz2010derivation}
\begin{equation}\label{eq:green}
g(\vx)=\left\{
\begin{array}{ll}
    \frac{\jj}{4}H_0^{(1)}(k_0\eta_\mathrm{b}\|\vx\|), & d = 2  \\
    \frac{1}{4\pi}\frac{\mathrm{e}^{\jj k_0\eta_\mathrm{b}\|\vx\|}}{\|\vx\|}, & d=3,
\end{array}
\right.
\end{equation}
where $H_0^{(1)}$ is the Hankel function of the first kind.
In DT, the acquisition setup records the (complex-valued) total field at the sensor positions~$\Gamma\subset\mathbb{R}^d$ with~$\Gamma \cap \Omega=\emptyset$.
\subsection{Discrete Forward Model}
\label{sec:PhysicalModel:sub:DiscreteForwMod}
To solve an inverse scattering problem,
recent works propose a two-step forward model based on~\eqref{eq:LSIntegral}~\cite{liu2017seagle,soubies2017efficient}, which we refer to as LiS methods.
The authors first discretize $\Omega$ into $N$ points lying on a uniform grid.
Then, the (nonlinear) forward imaging operator $\mH_\mathrm{LiS}(\vf):\mathbb{R}^N\rightarrow\mathbb{C}^M$ returns the scattered field on the sensor~$\Gamma$ as
\begin{align}
\mH_\mathrm{LiS}:\vf &\mapsto \tilde{\mG}\diag(\vf)\vu_\mathrm{LiS}(\vf), \label{eq:DiscreteForward}
\end{align}
where $\vf\in\mathbb{R}^N$ and $\vu^\mathrm{in}\in\mathbb{C}^N$ are the discrete (and vectorized) counterparts of the scattering potential and the incident field on $\Omega$, respectively.
The matrix~$\tilde{\mG} \in \mathbb{C}^{M\times N}$ encodes the convolution with the Green's function in~\eqref{eq:LSIntegral} in such way that it gets the scattered field on~$\Gamma$.
In~\eqref{eq:DiscreteForward},~$\vu_\mathrm{LiS}(\vf):\mathbb{R}^N\rightarrow\mathbb{C}^N$ is the discrete total field on~$\Omega$ and is computed from~\eqref{eq:LSIntegral}.
\subsection{Computation of \texorpdfstring{$\vu_\mathrm{LiS}(\vf)$}~~from the LiS Equation 
}\label{sec:PhysicalModel:sub:ComputeLiS}
In the LiS methods, $\vu_\mathrm{LiS}(\vf)$ is determined based on the inversion of the discretized form of \eqref{eq:LSIntegral}~\cite{soubies2017efficient,liu2017seagle}
\begin{equation}
\left(\mI_N-\mG\diag(\vf)\right)\vu_\mathrm{LiS}(\vf)  = \vu^\mathrm{in},\label{eq:invertLSm}
\end{equation}
where
$\mG\in\mathbb{C}^{N \times N}$ encodes the convolution with the Green's function in~\eqref{eq:LSIntegral}~\cite{pham2020three}.
In \cite{liu2017seagle, soubies2017efficient}, the normal equation of~\eqref{eq:invertLSm} was iteratively solved via Nesterov accelerated gradient descent~(NAGD) or conjugate gradient~(CG) methods.
In~\cite{pham2020three}, \eqref{eq:invertLSm} was directly solved by the biconjugate-gradient stabilized method (Bi-CGSTAB)~\cite{van1992bi}.
Since Bi-CGSTAB solves \eqref{eq:invertLSm} faster than both NAGD and CG~\cite{pham2020three}, we use Bi-CGSTAB in this work.

The Green's function~\eqref{eq:green} is oscillatory and has a singularity at $\vx = \mathbf{0}$, which is challenging to discretize. 
In~\cite{pham2020three}, the corresponding convolution operator~$\mG$ is properly discretized through a truncation trick~\cite{vainikko2000fast, vico2016fast} and the main computational burden amounts to four fast Fourier transforms (FFT) per iteration of Bi-CGSTAB. In practice, the FFTs are actually applied to a space $2^d$ times larger than the domain of interest so as to approximate an aperiodic convolution.
The LiS methods then require one to store the Fourier transform of the truncated Green's function ($2^dN$ points), which might lead to memory issues when $N$ is large, for instance in three-dimensional problems.

Our numerical experiments show that the LiS method with Bi-CGSTAB still requires a large number of iterations to converge when the object is strongly scattering.
Now, a slow convergence hinders the efficiency of the LiS methods because the total field needs to be computed repeatedly. Based on those observations, we propose instead to solve the Helmholtz equation~\eqref{eq:HelmScatterRightIncident} directly with an efficient and robust MG solver.

\section{Multigrid Methods}
\label{sec:MGSolverHelmholtz:sub:MGMethods}

Let us assume that we want to solve in terms of~$\vu^h\in\mathbb{C}^N$ the system of linear equations
\begin{equation}
\label{eq:genlinprob}
    \mA^h\vu^h=\vb,
\end{equation}
where~$\vb\in\mathbb C^N$ and where~$\mA^h\in\mathbb{C}^{N\times N}$ is a symmetric positive-definite matrix corresponding to the discretization of some partial differential equation with mesh size~$h$.
To solve~$\eqref{eq:genlinprob}$, there exist
local relaxation methods, also called local smoothers (\eg{} Jacobi, Gauss-Seidel, Kaczmarz).
Their behavior was studied in early works~\cite{brandt1977multi}.
These techniques refine a current estimate of the solution in an iterative manner.
Let ${\vv}^{h}_\nu$ be the estimate of the solution~$\vu^h$ of~\eqref{eq:genlinprob} at the $\nu$th iteration.
Then, the Jacobi method with damped factor $\omega_\mathrm{S}\in(0,1]$ sets ${\vv}^{h}_{\nu+1}$ as
\begin{equation}\label{eq:dampedJ}
{\vv}^{h}_{\nu+1} = {\vv}^{h}_\nu-\omega_\mathrm{S} \mD_{\mA^h}^{-1}(\mA^h {\vv}^{h}_\nu-\vb),
\end{equation}
where the diagonal matrix $\mD_{\mA^h}=\diag(\diag(\mA^h))\in\mathbb C^{{N}\times{N}}$ is formed out of the diagonal of~$\mA^h$.
Let the residual of~\eqref{eq:genlinprob} at the $\nu$th iteration be
\begin{equation}
\label{eq:linsyserr}
    \mA^h\ve^h_\nu = \vr^h_\nu,
\end{equation}
where~$\vr^h_\nu = (\vb - \mA^h{\vv}^{h}_\nu) \in \mathbb C^{{N}}$ is the residual and $\ve^h_\nu=(\vu^h-{\vv}_\nu^h)$ the current error.
Evidently, ${\vv}^{h}_\nu$ is the solution of \eqref{eq:HelmScatterRightIncident:discrete} if $\ve^h_\nu=\bm 0$. 
Using the eigenvectors of $\mA^h$ as the basis to represent $\ve^h_\nu$, we refer to the eigenvectors corresponding to the large (small, respectively) eigenvalues as the high-frequency (low-frequency, respectively) components of $\ve^h_\nu$.

The local Fourier analysis (LFA)---a rigorous quantitative analysis tool for MG methods~\cite{brandt1994rigorous}---showed that local relaxation methods can efficiently eliminate the high-frequency components in $\ve^h_\nu$. However, LFA also showed that these methods require more iterations to remove the low-frequency components.
To exploit this specificity, MG methods rely on relaxation steps and coarse-grid correction~(CGC).
A relaxation step typically consists in few iterations of a local smoother with low computational cost~(\eg{} the Jacobi method) so as to efficiently eliminate the high-frequency components in the error~$\ve^h_\nu$.
The CGC then addresses the remaining error (\ie{} the low-frequency components) by solving the related problem on a coarser grid.
The CGC benefits from two aspects: (a) the coarse problem has fewer variables, thus reducing the computational cost; (b) the low-frequency error on the fine problem is usually well approximated on the coarse problem and looks bumpier, which again can be efficiently eliminated by local smoothers~\cite{brandt1977multi}.

We now introduce two operators. The prolongation operator $\mP$ transfers a vector from a coarse grid to a fine grid. The restriction operator $\mR$ transfers a vector from a fine grid to a coarse grid. We refer the reader to \cite{trottenberg2000multigrid,xu2017algebraic} about the choice of $\mP$ and~$\mR$ for different problems.
The choice of $\mP$ and $\mR$ in this work will be specified in \Cref{sec:MGSolverHelmholtz:sub:ProposedMGMethod}.
The typical formulation of MG methods is the two-grid cycle presented in \Cref{alg:TG}. It first calls $\nu_1\in \mathbb N_{\geq 0}$ pre-relaxation step(s), which provides an approximate solution ${\vv}^h$.
Subsequently, one restricts the residual to $\vr^h = (\vb - \mA^h{\vv}^h)$ to $\vr^{2h} = \mR\vr^h$ and solves the coarse problem
\begin{equation}
\label{eq:linsyserrcoarse}
    \mA^{2h}\ve^{2h} = \vr^{2h},
\end{equation}
to obtain the error $\ve^{2h}$. Then, the error $\ve^{2h}$ is prolongated to correct the current estimate~$\vv^h$ on the fine grid.
Finally, $\nu_2\in\mathbb N_{\geq 0}$ post-relaxation steps usually follow and we get the final estimate~$\widehat{\vv}^h$.
The matrix $\mA^{2h}$ can be formulated via a Galerkin formulation (\ie{} $\mA^{2h}=\mR\mA^h\mP$) or via discretizing \eqref{eq:genlinprob} with mesh size $2h$.
\begin{algorithm}[t]
	\caption{Two-grid cycle}
	\label{alg:TG}        
	\begin{algorithmic}[1]
	    \REQUIRE $\mA^h\in \mathbb C^{{N} \times {N}},\vb\in \mathbb C^{N},\vv^h\in \mathbb C^{N}$, and $\nu_1,\nu_2 \in \mathbb N_{\geq 0}$.
		\lastcon $\widehat{\vv}^h \leftarrow TwoGrid(\mA^h,\vb,\vv^h)$.
		\STATE {Call $\nu_1$ times pre-relaxation: $\vv^h\leftarrow Relax(\mA^h,\vb,\vv^h)$.}
		\STATE {Compute the residual $\vr^h \leftarrow \vb-\mA^h \vv^h$.}
        \STATE {Restrict $\vr^h$ for the coarse problem $\vr^{2h} \leftarrow \mR\vr^h$.}
		\STATE {Compute $\ve^{2h}$ by solving $\mA^{2h} \ve^{2h} = \vr^{2h}$.}
		\STATE {Prolong $\ve^{2h}$ and apply CGC: $\vv^h\leftarrow \vv^h + \mP \ve^{2h}$.} 
		\STATE {Apply $\nu_2$ times post-relaxation: $\widehat{\vv}^h = Relax(\mA^h,\vb,\vv^h)$.}
	\end{algorithmic}
\end{algorithm}
In practice, \eqref{eq:linsyserrcoarse} will face the same issue as \eqref{eq:linsyserr} if local smoothers are used.
We can then apply an additional two-grid cycle to solve \eqref{eq:linsyserrcoarse}.
Such a recursive procedure can continue until the coarse problem is solved exactly, which yields a so-called MG algorithm.
In~\Cref{alg:MG}, we present two MG schemes, namely V-cycle with $CycleType=1$ and W-cycle with $CycleType=2$.
Note that the numbers~$\nu_1,\nu_2$ of relaxation steps are not necessarily the same at each level, which allows us to balance the speed of convergence with the cost of computation.
We refer the reader to \cite{briggs2000multigrid,trottenberg2000multigrid} and the references therein for more details about MG methods.
In \Cref{fig:typeCycle}, we display a four-level scheme of the V-cycle and W-cycle to highlight their difference.

\begin{algorithm}[t]
	\caption{Multigrid cycle}            
	\label{alg:MG}                 
	\begin{algorithmic}[1]
	    \REQUIRE $\mA^h\in \mathbb C^{{N} \times {N}},\vb\in \mathbb C^{N},\vv^h\in \mathbb C^{N}$,
	    $CycleType\in\mathbb N_{\geq 1}$,
	    $h>0$,
	    $\nu_1,\nu_2 \in \mathbb N_{\geq 0}$.
		\lastcon
		$\widehat{\vv}^h \leftarrow MGCycle(\mA^h,\vb,\vv^h,CycleType, h)$.
		\IF {coarsest level}
		\STATE Solve (exactly)
		$\mA^{h} \ve^{h} = \vb$.
		\RETURN $\ve^h$.
		\ENDIF
		\STATE {Call $\nu_1$ times pre-relaxation: $\vv^h\leftarrow Relax(\mA^h,\vb,\vv^h)$.}
		\STATE {Compute the residual $\vr^h = \vb-\mA^h \vv^h$.}
		\STATE {Restrict $\vr^h$ for the coarse problem $\vr^{2h} = \mR\vr^h$.}
		\STATE {$CycleCount = 1$.}
		\STATE {$\ve^{2h}\leftarrow \bm 0$.}
		\FOR {$CycleCount\leq CycleType$}
		\STATE {$\ve^{2h} \leftarrow MGCycle(\mathcal \mA^{2h},\vr^{2h},\ve^{2h},CycleType,2h)$.}
		\STATE {$CycleCount\leftarrow CycleCount+1$.}
		\ENDFOR
		\STATE {Prolong $\ve^{2h}$ and apply CGC: $\vv^h\leftarrow \vv^h + \mP\ve^{2h}$.} 
		\STATE {Apply $\nu_2$ times post-relaxation: $\widehat{\vv}^h = Relax(\mathcal \mA^h,\vb,\vv^h)$.}
	\end{algorithmic}
\end{algorithm}

\begin{figure}
\centering
\newcommand{\rC}{0.05cm}
\newcommand{\szmax}{1.8}
\newcommand{\ddist}{\szmax/6}
\subfigure[V-cycle]{
\begin{tikzpicture}
\pgfsetarrowsstart{Latex}
\node[circle,draw,fill=white,minimum size=\rC] (v1) at (0,\szmax) {};
\node[circle,draw, fill=white,minimum size=\rC] (v2) at (\ddist,2/3*\szmax) {};
\node[circle,draw,fill=white, minimum size=\rC] (v3) at (2*\ddist,\szmax/3) {};
\node[circle,draw,fill=black, minimum size=\rC] (v4) at (3*\ddist,0) {};
\node[circle,draw, fill=white,minimum size=\rC] (v5) at (4*\ddist,\szmax/3) {};
\node[circle,draw, fill=white,minimum size=\rC] (v6) at (5*\ddist,2/3*\szmax) {};
\node[circle,draw, fill=white,minimum size=\rC] (v7) at (6*\ddist,\szmax) {};
\draw[-{Latex}] (v1) -- (v2);
\draw[-{Latex}] (v2) -- (v3);
\draw[-{Latex}] (v3) -- (v4);
\draw[-{Latex}] (v4) -- (v5);
\draw[-{Latex}] (v5) -- (v6);
\draw[-{Latex}] (v6) -- (v7);
\end{tikzpicture}
}
\hspace{1cm}
\subfigure[W-cycle]{
\begin{tikzpicture}
\node[circle,draw, fill=white,minimum size=\rC] (v1) at (0,\szmax) {};
\node[circle,draw,fill=white, minimum size=\rC] (v2) at (\ddist,2/3*\szmax) {};
\node[circle,draw, fill=white,minimum size=\rC] (v3) at (2*\ddist,\szmax/3) {};
\node[circle,draw,fill=black, minimum size=\rC] (v4) at (3*\ddist,0) {};
\node[circle,draw,fill=white,minimum size=\rC] (v5) at (4*\ddist,\szmax/3) {};
\node[circle,draw,fill=black, minimum size=\rC] (v6) at (5*\ddist,0) {};
\node[circle,draw, fill=white,minimum size=\rC] (v7) at (6*\ddist,\szmax/3) {};
\node[circle,draw,fill=white, minimum size=\rC] (v8) at (7*\ddist,2/3*\szmax) {};
\node[circle,draw,fill=white, minimum size=\rC] (v9) at (8*\ddist,\szmax/3) {};
\node[circle,draw,fill=black, minimum size=\rC] (v10) at (9*\ddist,0) {};
\node[circle,draw, fill=white,minimum size=\rC] (v11) at (10*\ddist,\szmax/3) {};
\node[circle,draw,fill=black, minimum size=\rC] (v12) at (11*\ddist,0) {};
\node[circle,draw,fill=white, minimum size=\rC] (v13) at (12*\ddist,\szmax/3) {};
\node[circle,draw,fill=white, minimum size=\rC] (v14) at (13*\ddist,2/3*\szmax) {};
\node[circle,draw,fill=white,minimum size=\rC] (v15) at (14*\ddist,\szmax) {};
\draw[-{Latex}] (v1) -- (v2);
\draw[-{Latex}] (v2) -- (v3);
\draw[-{Latex}] (v3) -- (v4);
\draw[-{Latex}] (v4) -- (v5);
\draw[-{Latex}] (v5) -- (v6);
\draw[-{Latex}] (v6) -- (v7);
\draw[-{Latex}] (v7) -- (v8);
\draw[-{Latex}] (v8) -- (v9);
\draw[-{Latex}] (v9) -- (v10);
\draw[-{Latex}] (v10) -- (v11);
\draw[-{Latex}] (v11) -- (v12);
\draw[-{Latex}] (v12) -- (v13);
\draw[-{Latex}] (v13) -- (v14);
\draw[-{Latex}] (v14) -- (v15);
\end{tikzpicture}
}
    \caption{Four-level representation of V-cycle and W-cycle. The symbol~\tikzcircle[fill=white]{3pt} refers to the relaxation procedure; \tikzcircle[fill=black]{3pt} refers to the coarsest level, which is usually solved exactly;
    \tikzarr{0}{0}{0.15}{0.3}
    refers to the prolongation; \tikzarr{0}{0.3}{0.15}{0.}~refers to the restriction.
    }
    \label{fig:typeCycle}
\end{figure}

Remarkably, the additional computational cost of such a multilevel approach is low. Here, we take the computational cost of one V-cycle as example.
Let us define the computational cost of one local relaxation on the finest problem as one work-unit (WU) and examine how many WUs are needed for one V-cycle.
In this discussion, we omit the cost of $\mP$ and $\mR$ which amounts to at most $20$\% of the cost of the entire cycle~\cite{briggs2000multigrid}. Moreover, 
we also assume that the computational cost on the coarsest problem is negligible.
When the mesh-size of the coarse problem is doubled~(\ie{} $2h$), the dimension is reduced to $\frac{1}{2^d}$ of the fine grid.
At each level, the computational cost amounts to $2^{-pd}\text{WU}$, where $p=0,1,\ldots,N_\mathrm{level}$ and $N_\mathrm{level}$ denotes the number of levels.
Overall, the computational cost of one V-cycle is
\begin{equation}
(v_1+v_2) \sum_{p=0}^{N_\mathrm{level}} 2^{-pd}  \text{WUs} < \frac{v_1+v_2}{1-2^{-d}}\text{WUs}.
\end{equation}
For $d=2$, we obtain an upper bound $\frac{4(v_1+v_2)}{3}\text{WUs}$ which suggests that, compared with a single level, multilevel does not dramatically increase the computation~\cite{trottenberg2000multigrid}. 

\section{Multigrid-Based Solver for the Helmholtz Model}
\label{sec:MGSolverHelmholtz}
We now discuss the computation of the total field from the Helmholtz equation instead of the LiS equation. Moreover, we present MGH as an MG-based solver for the Helmholtz equation, tailored for inverse scattering.
In particular, MGH efficiently computes the total field for strongly scattering objects.
The computations are carried on a domain only slightly larger than the one of interest, which contrasts with the requirements of the LiS method (\ie{} $2^d$ times larger).
In what follows, we first describe the discretization of \eqref{eq:HelmScatterRightIncident}.
We further discuss the challenges of a plain application of MG methods to the Helmholtz equation and present an heuristic way to address these issues.

\subsection{Discretization of the Helmholtz Equation}
\label{sec:MGSolverHelmholtz:sub:Discretization}
We discretize the Helmholtz equation on a domain of interest $\Omega$ with $d=2$ (\Cref{fig:HelmholtzBoundaryCond}).
To avoid artificial reflections near the boundary, we consider an extended domain~$\Omega_\mathrm{e}$ with an additional absorbing boundary layer (ABL) 
that gradually damps the outgoing waves.
To that end, we multiply $k_0^2\eta^2(\vx)$ in \eqref{eq:HelmScatterRightIncident} with

\begin{equation}
\alpha(\vx) =
    1-\jj\beta \frac{\|\vx - \mathcal{P}_\Omega(\vx)\|^2}{L},
\end{equation}
where $\beta\geq 0$ is an arbitrary parameter,
\mbox{$L > 0$} is the thickness of the ABL,
and $\mathcal{P}_\Omega(\vx)$ is the orthogonal projection of~$\vx$ on~$\Omega$.
Without loss of generality, $\Omega_\mathrm{e}$ is normalized to $[0,1]^2$ and $N_\mathrm{e}$ points are used to discretize \eqref{eq:HelmScatterRightIncident} on $\Omega_\mathrm{e}$.
The points lie on a uniform grid with mesh-size $h = \frac{1}{\sqrt{N_\mathrm{e}}-1}$ at the positions \mbox{$\vx = (mh,nh)$} with $m,n = 0,\ldots, (\sqrt{N_\mathrm{e}}-1)$.

Let $\Omega_\mathrm{e}^h$, $\vu^{\mathrm{sc},h}(\vf^{\,h})\in\mathbb{C}^{{N_\mathrm{e}}}$ and $\vb^h\in\mathbb{C}^{{N_\mathrm{e}}}$ denote the discretized $\Omega_\mathrm{e}$ with mesh-size $h$, the discretized and vectorized versions of~$u^\mathrm{sc}(\vx)$ and $f(\vx)u^\mathrm{in}(\vx)$ on~$\Omega_\mathrm{e}^h$, respectively. The discretization of~\eqref{eq:HelmScatterRightIncident} yields the system of linear equations
\begin{equation}
\mA^h_\mathrm{Hel} \vu^{\mathrm{sc},h}(\vf^{\,h})=\vb^{h},\label{eq:HelmScatterRightIncident:discrete}
\end{equation}
where $\mA^h_\mathrm{Hel} \in \mathbb{C}^{{N_\mathrm{e}} \times {N_\mathrm{e}}}$ is the discretization of \mbox{$\left(-\nabla^2-k_0^2\eta^2(\vx)\right)$} on $\Omega^h_\mathrm{e}$. Specifically, the second-order finite difference is used to discretize the Laplace operator~$\nabla^2$. The ($n\sqrt{N_\mathrm{e}}+m$)th row of \eqref{eq:HelmScatterRightIncident:discrete}
reads
\begin{align}
\frac{-(u^h_{m-1,n}+u^h_{m+1,n}+u^h_{m,n-1}+u^h_{m,n+1})+ 4u^h_{m,n}}{h^2}&\nonumber\\
-(\eta^h_{m,n})^2u^h_{m,n}=b_{m,n}^h,&\label{eq:discretizedHelm:h}
\end{align}%
where $u_{m,n}$ is the $(n\sqrt{N_\mathrm{e}}+m)$th element of $\vu^{\mathrm{sc},h}(\vf^{\,h})$ and  $\mathit{\eta}_{m,n}^h$ denotes the sample $k_0\eta(mh,nh)$.
Moreover, the first-order Sommerfeld radiation condition is used to avoid a nonphysical solution~\cite{elman2001multigrid}.
At the boundary, this translates into
\begin{align}
u_{m,-1}^h &= (1 + \jj h\eta_{m,0})u_{m,0}^h,\nonumber\\
u_{m,\sqrt{N_\mathrm{e}}}^h &= (1 + \jj h\eta_{m,\sqrt{N_\mathrm{e}}-1})u_{m,\sqrt{N_\mathrm{e}}-1}^h,\nonumber\\
u_{-1,n}^h &= (1 + \jj h\eta_{0,n})u_{0,n}^h,\nonumber\\ u_{\sqrt{N_\mathrm{e}},n}^h &= (1 + \jj h\eta_{\sqrt{N_\mathrm{e}}-1,n})u_{\sqrt{N_\mathrm{e}}-1,n}^h.
\end{align}
\noindent {We note that we obtain the scattered field on $\Omega$ by directly truncating $\vu^{sc,h}$ from $\Omega_\mathrm{e}$ to $\Omega$ after solving \eqref{eq:HelmScatterRightIncident:discrete}.}


\begin{figure}
	\centering
	\begin{tikzpicture}[square/.style={regular polygon,regular polygon sides=4}]
	\node[square,draw,dashdotted,inner sep=1.15cm] (A) at (0,0){};
    \node[square,draw,solid,inner sep=0.675cm] (B) at (0,0){{$\Omega$}};
    \path [{Latex}-{Latex},draw] (A.west) -- node [midway,above] {{$L$}} (B.west);
    \node[below] at (A.north) {$\Omega_\mathrm{e}$};
    \node[right] at (A.east) {{$\partial\Omega_\mathrm{e}$}};
	\end{tikzpicture}
	\caption{A 2D domain with an absorbing boundary layer.}
	\label{fig:HelmholtzBoundaryCond}
\end{figure}
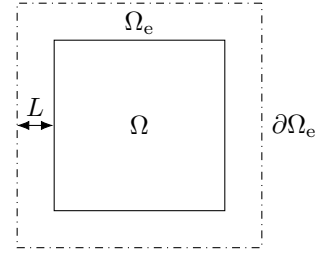
\subsection{Multigrid Methods and the Helmholtz Equation}\label{sec:MGSolverHelmholtz:sub:MGMethodsandHelmholtz}
Despite the apparent simplicity of MG methods, their direct application to the Helmholtz equation is not straightforward.
The reasons are two-fold: 1) the commonly used local smoothers (\eg{} pointwise smoothers) \cite{elman2001multigrid} in MG methods will diverge if applied to the Helmholtz equation; 2) the standard CGC will amplify certain components of the error instead of reducing them~\cite{ernst2012difficult}.
To understand these behaviors, we deploy the LFA tool to quantitatively estimate the performance of a two-grid cycle for a partial differential equation with constant coefficients \cite{wienands2004practical}.
To apply LFA, we momentarily assume that the object $k^2= k_0^2\eta^2(\vx) > 0$ is constant and that the boundary condition is periodic. Although such assumptions do not hold for the problem of interest here, we still gain relevant insights.

Let $\ve^h_\nu = \vu^h-\vv^h_\nu$ denote the error after calling $\nu$ times \Cref{alg:TG}.
Then, the error $\ve^h_{\nu+1}$ is specified by
\begin{eqnarray}
&\ve^h_{\nu+1}&=(\mS^h)^{v_2}\mC_h^{2h}(\mS^h)^{v_1}\ve^h_\nu\label{eq:itererror}\nonumber\\
\text{with }& \mS^h &= \mI_{N_\mathrm{e}} - \omega_\mathrm{S}\mD_{\mA^h_\mathrm{Hel}}^{-1}\mA^h_\mathrm{Hel}\label{eq:SmoothIteration}\\
\text{and }& \mC_h^{2h}&=\mI_{N_\mathrm{e}} - \mP(\mA^{2h})^{-1}\mR\mA^h_\mathrm{Hel},
\label{eq:TGIterationMatrix}
\end{eqnarray}
where $\mS^h\in \mathbb C^{{N_\mathrm{e}}\times {N_\mathrm{e}}}$ represents the iteration matrix of a local smoother at mesh-size~$h$, \eqref{eq:SmoothIteration} denotes the Jacobi method with damped factor $\omega_\mathrm{S}$, and
$\mC_h^{2h}\in \mathbb C^{{N_\mathrm{e}}\times {N_\mathrm{e}}}$~corresponds to the CGC.

Denote by $v^h(\vtheta,\vx)= \mathrm{e}^{\jj\theta_1 x_1/h} \mathrm{e}^{\jj\theta_2 x_2/h}$ a grid function and 
$\vv^h(\vtheta)\in\mathbb C^{{N_\mathrm{e}}}$ its discrete and vectorized counterpart
sampled on $\Omega_\mathrm{e}^h$. 
The parameters $\vtheta=(\theta_1,\theta_2)\in [-\pi,\pi)^2$ characterize the frequency of the grid function.
Under the assumption of a constant sample and periodic boundary conditions,
$\mA^h_\mathrm{Hel}$ and $\mS^h$ encode a circular convolution, which means that their eigenfunctions are the discrete grid functions.
The eigenvalues (modes) of $\mA^h_\mathrm{Hel}$ and $\mS^h$ are then $ {a}^h(\vtheta)=\frac{4-2(\cos\theta_1+\cos\theta_2)-k^2h^2}{h^2}$ and $ s^h(\vtheta)=\left(1-\omega_\mathrm{S}+\frac{2\omega_\mathrm{S}}{4-k^2h^2}(\cos\theta_1+\cos\theta_2)\right)$, respectively.
In the following, we briefly discuss two main challenges that hinder a direct application of MG methods to the Helmholtz equation and refer the readers to several works which fully present these issues~\cite{elman2001multigrid,ernst2012difficult}.

\paragraph{Divergence of the Local Smoothers}
For $(4-k^2h^2)>0$, we have that
\begin{equation}
\max_{\vtheta}| s_h(\vtheta)| = 1-\omega_\mathrm{S}+\frac{4\omega_\mathrm{S}}{4-k^2h^2},\end{equation}
which is always larger than $1$ when $kh\neq 0$. The Jacobi method will therefore be divergent when applied to the Helmholtz equation.
A similar phenomenon is also observed for other point-wise smoothers, such as Gauss-Seidel and its variants.
\paragraph{Amplification of the Error by the CGC}
The purpose of the CGC is to reduce the low-frequency (smooth) error, but it was observed that this step can amplify certain modes instead~\cite{elman2001multigrid}.
Let us assume that the current error on the fine problem after a pre-relaxation step consists in a smooth component~$\vv^h(\vtheta)$.
Then, the error after CGC reads \cite{yavneh1998coarse}
\begin{align}
    \ve^h &=\left(\mI^h-\mP(\mA^{2h}_\mathrm{Hel})^{-1}\mR\mA^h_\mathrm{Hel}\right)\vv^h(\vtheta)\nonumber\\
    &=\vv^h(\vtheta) - {a}^h(\vtheta)\mP(\mA^{2h}_\mathrm{Hel})^{-1}\mR\vv^h(\vtheta)\nonumber\\
    &\approx\left(1 - \frac{{a}^h(\vtheta)}{{a}^{2h}(2\vtheta)}\right)\vv^h(\vtheta),
\end{align}
where ${a}^{2h}(2\vtheta)$ is the eigenvalue of $\mA^{2h}_\mathrm{Hel}$ corresponding to $2\vtheta$, assuming that~$\mP\mR\vv^h(\vtheta) = \vv^h(\vtheta)$ for $\vtheta\in(-\frac{\pi}{2},\frac{\pi}{2}]^2$.
Evidently, the CGC will effectively reduce the error if the ratio $\frac{{a}^h(\vtheta)}{{a}^{2h}(2\vtheta)}$ is close to but less than $1$ and $\ve^h=\bm0$ if $\frac{{a}^h(\vtheta)}{{a}^{2h}(2\vtheta)}=1$. However, for the Helmholtz equation, prior works observed that~$\frac{{a}^h(\vtheta)}{{a}^{2h}(2\vtheta)}$ can be negative for certain components, especially on coarser grid~\cite{elman2001multigrid,ernst2012difficult}.
In those cases, the CGC amplifies the error since $\left(1 - \frac{{a}^h(\vtheta)}{{a}^{2h}(2\vtheta)}\right)>1$.
{Hence, this phenomenon will happen for more components if many levels are used}.

To overcome the divergence of the local smoothers, Brandt \emph{et al.} \cite{brandt1997wave} suggested to use the Kaczmarz method in the relaxation step.
This local smoother is convergent but converges slowly because it works on the normal equation.
Similarly, Elman \emph{et al.}~\cite{elman2001multigrid} used a convergent Krylov-based method as the local smoother, but their method nevertheless needs to store some previous iterates, thus increasing the memory requirement.
To solve the problem of the CGC, Stolk \emph{et al.}~\cite{stolk2014multigrid} proposed an optimized scheme to discretize the Helmholtz equation at the coarser levels.
Their method decreased the number of modes that lead to divergence, which enables the use of more levels.
However, the optimized schemes are the solutions of constrained minimization problems which must be resolved whenever the scattering potential changes.
Recent works showed that MG methods converge more easily if the lefthand side of~\eqref{eq:HelmScatterRightIncident} is $\left(-\nabla^2 - \kappa k_0^2\eta^2(\vx)\right)$ with $\kappa\in\mathbb C$  instead~\cite{erlangga2006novel}.
Let $\mK_\kappa$ denote the discretization of $\left(-\nabla^2 - \kappa k_0^2\eta^2(\vx)\right)$ on $\Omega^h_\mathrm{e}$.
The solution of~\eqref{eq:HelmScatterRightIncident:discrete} is then computed by using $\mK_\kappa$ as a preconditioner.
This technique, called shifted-Laplacian preconditioner, can help Krylov-based methods to converge faster~\cite{erlangga2006novel}.
We note that we did not find benefit in using the shifted-Laplacian preconditioner for the inverse-scattering problems presented in this paper.

\subsection{Proposed Multigrid-Based Solver}
\label{sec:MGSolverHelmholtz:sub:ProposedMGMethod}
In the spirit of~\cite{elman2001multigrid, erlangga2006novel}, we use Bi-CGSTAB with a preconditioner~\mbox{$\mK_\mathrm{MG} \approx \mA^h_\mathrm{Hel}$}
to solve~\eqref{eq:HelmScatterRightIncident:discrete}~(\Cref{alg:SolverHel}).
The efficiency of our method stems from the way we apply~$\mK_\mathrm{MG}^{-1}$: We deploy a standard MG method~(\Cref{alg:MG})~(see Steps~$9$ and $14$ in \Cref{alg:SolverHel}).
For the relaxation, we still use \eqref{eq:dampedJ} but with few iterations~(\ie{} $\nu_{1,2} \leq 2$) to mitigate a possible divergence of the local smoother. By doing so, Bi-CGSTAB would correct any deviation of the MG method~\cite{elman2001multigrid,erlangga2006novel}. Furthermore, we alleviate the issue of the CGC previously mentioned by using few levels.
In~\cite{calandra2013improved}, the best performance is achieved with two levels only, which corroborates what we observed in our experiments.
To mitigate the so-called pollution effect, a rule of thumb is to use at least $10$ points per wavelength for the coarsest level but slightly fewer than $10$ points per wavelength were sufficient in most of our experiments \cite{erlangga2006novel}.
In this work, we solve the coarsest level exactly, but one can also use iterative methods \cite{calandra2013improved,treister2019multigrid}.

For the restriction $\mR$, we use the full-weighting operator.
Specifically, the value of $r^{2h}_{m,n}=(\mR \vr^{h})_{m,n}$ is given by
\begin{align}
r^{2h}_{m,n} = &\frac{1}{16}\left(4r^h_{2m,2n} + 2\left(r^h_{2m-1,2n}+r^h_{2m+1,2n}\right.\right.\nonumber\\
&+ \left.r^h_{2m,2n-1}+r^h_{2m,2n+1}\right)
+ \left(r^h_{2m-1,2n-1}\right.\nonumber\\
&+r^h_{2m-1,2n+1} +\left.\left. r^h_{2m+1,2n-1}+r^h_{2m+1,2n+1}\right)\right),
\end{align}
where $m,n=0,1,\ldots,\frac{\sqrt{N_\mathrm{e}}-1}{2}$ denote the indices on the coarse problem. Note that the value of the points at the boundary is set to $0$. The prolongation $\mP$ is the adjoint of $\mR$ such that $e^{h}_{m,n} = (\mP\ve^{2h})_{m,n}$ is given by
\begin{align}
&e^h_{m,n} =\nonumber\\
&\left\{
\begin{array}{ll}
e^{2h}_{m/2,n/2},& m,n ~\text{even}\\
\frac{1}{2}\left(e^{2h}_{(m-1)/2,n/2}+e^{2h}_{(m+1)/2,n/2}\right),& m~\text{odd},n~\text{even}\\
\frac{1}{2}\left(e^{2h}_{m/2,(n-1)/2}+e^{2h}_{m/2,(n+1)/2}\right),& m~\text{even},n~\text{odd}\\
\frac{1}{4}\left(e^{2h}_{(m-1)/2,(n-1)/2}+e^{2h}_{(m-1)/2,(n+1)/2}\right.\\
+\left.e^{2h}_{(m+1)/2,(n-1)/2}+e^{2h}_{(m+1)/2,(n+1)/2}\right),& m,n~\text{odd},
\end{array}\right.
\end{align}
where $m,n=0,1,\ldots,\sqrt{N_\mathrm{e}}-1$ denote the indices on the fine problem. 
For the coarser problems, we directly re-discretize the Laplacian operator with double mesh-size and use the full-weighted transfer to restrict $k_0^2\eta^2(\vx)$. The value of the points near the boundary of $k_0^2\eta^2(\vx)$ is set to $k_0^2\eta_\mathrm{b}^2$.

\begin{algorithm}[t]
	\caption{Bi-CGSTAB with \Cref{alg:MG} as a preconditioner for solving \eqref{eq:HelmScatterRightIncident:discrete}}
	\label{alg:SolverHel}
	\begin{algorithmic}[1]
	    \REQUIRE {Set $\mK \leftarrow  -\nabla^2-k_0^2\eta^2(\vx)$ and $CycleType,~h,~N_\mathrm{level}$ for \Cref{alg:MG} and choose tolerance  $\varepsilon$.}
   \STATE {$\vr_0 = \vb^h-\mA^h_\mathrm{Hel} \vu_0^{\mathrm{sc},h}$.}
   \STATE {$\hat {\vr}_0= \vr_0$.}
   \STATE {$\rho_0= 1,~\alpha=1,~\sigma_0= 1$.}
   \STATE {$\vv_0= \bm0,~\vp_0= \bm 0$.}
   \STATE {$\beta=0$, $\vy=\mathbf{0}$, $\vh=\mathbf{0}$,
   $\vs=\mathbf{0}$, $\vz=\mathbf{0}$, $\vt=\mathbf{0}$.}
      \FOR {$Iter=1,2,\cdots$} 
      \STATE {$\rho_{Iter}= <\hat{\vr}_0,\vr_{Iter-1}>$}  
      \STATE {$\beta\leftarrow \left(\rho_{Iter}/\rho_{Iter-1}\right)\left(\alpha /\sigma_{Iter-1}\right)$.}
      \STATE {$\vp_{Iter}=\vr_{Iter-1}+\beta(\vp_{Iter-1}-\sigma_{Iter-1}\vv_{Iter-1})$.}
      \STATE  
      {\mybox{$\vy\leftarrow MGCycle(\mK,\vp_{Iter},\bm 0,CycleType,h,N_\mathrm{level}).$}} 
      \STATE {$\vv_{Iter}= \mA^h_\mathrm{Hel}\vy$.}
      \STATE {$\alpha\leftarrow \rho_{Iter}/\left<\hat{\vr}_0,\vv_{Iter}\right>$.}
      \STATE {$\vh \leftarrow  \vu_{Iter-1}^{\mathrm{sc},h}+\alpha \vy$.}
      \STATE {$\vs\leftarrow \vr_{Iter-1}-\alpha \vv_{Iter}$.}
      \STATE
      {\mybox{$\vz\leftarrow MGCycle(\mK,\vs,\bm 0,CycleType,h,N_\mathrm{level}).$}\label{alg:SolverHel:PredII}}
      \STATE {$\vt\leftarrow \mA^h_\mathrm{Hel}\vz$.}
      \STATE {$\sigma_{Iter}= \left<\vt,\vs\right>/\left<\vt,\vt\right>$.}
      \STATE {$\vu^{\mathrm{sc},h}_{Iter}= \vh+\sigma_{Iter}\vz$.}
      \STATE {$\vr_{Iter}=  \vs-\sigma_{Iter}\vt$.}
       \IF {$\|\vr_{Iter}\|_2\leq \varepsilon$.}
       \STATE Return $\vu^{\mathrm{sc},h}_{Iter}$.
       \ENDIF
        \ENDFOR
	\end{algorithmic}
\end{algorithm}

\section{Problem Formulation and Optimization}
\label{sec:RecProblemFormulationandOptimization}
We are now equipped with the Helmholtz-based forward model
\begin{align}
\mH_\mathrm{MGH}:\vf &\mapsto \tilde{\mG}\diag(\vf)(\vu^\mathrm{sc}_{\mathrm{MGH}}(\vf) + \vu^\mathrm{in}), \label{eq:DiscreteForwardHelm}
\end{align}
where $\vu^\mathrm{sc}_{\mathrm{MGH}}(\vf)$ is computed with \Cref{alg:SolverHel}.
Then, we formulate inverse-scattering as the solution~$\vf^\ast$ to a composite problem with nonnegativity constraint
\begin{equation}
\vf^{\,*}=\arg\min_{\vf\in \mathbb{R}^N_{\geq 0}} \sum_{q=1}^Q\mathfrak D_q(\mH_\mathrm{MGH}^q(\vf),\vy^\mathrm{sc}_q) + \tau \mathfrak R(\vf),\label{eq:MAPInverse}
\end{equation}
where the data-fidelity term $\mathfrak D_q:\mathbb C^M\times \mathbb C^M \rightarrow \mathbb R$ enforces the consistency with the measurements~$\{\vy^\mathrm{sc}_q\in\mathbb{C}^M\}_{q=1}^Q$,
$\mathfrak R:\mathbb R^N\rightarrow \mathbb R$ regularizes the solution, and $\tau>0$ is a tradeoff parameter to balance these two terms. Note that the forward model~$\mH_\mathrm{MGH}^q(\vf)$ uses the incident field $\vu^\mathrm{in}_q$. 
In this work, we set the data-fidelity term as the quadratic error, for $q=1,\ldots,Q$,
\begin{equation}
\mathfrak D_q(\mH_\mathrm{MGH}^q(\vf),\vy^\mathrm{sc}_q) =\frac{1}{2}\|\mH_\mathrm{MGH}^q(\vf)-\vy_q^\mathrm{sc}\|_2^2.\label{eq:DataFidelityExplicit}
\end{equation}
For the regularization term, we choose the isotropic total variation (TV)~\cite{rudin1992nonlinear} but one can adopt other regularizations such as the Hessian Schatten-norm \cite{lefkimmiatis2013hessian}, plug-and-play prior \cite{kamilov2017plug,hong2020solving}, or tailored regularization \cite{pham2020adaptive}.

The accelerated forward-backward splitting (FBS)~\cite{beck2009fast,nesterov2013gradient} is adopted here to solve \eqref{eq:MAPInverse}. The detailed description of FBS is summarized in \Cref{alg:FISTA}, of which we provide now some details.
\begin{itemize}
\item If $Q$ is large enough, then one may use the stochastic version as shown at Line \ref{alg:FISTA:stochasticSelect} so that only a (random) subset of the measurements is chosen to estimate the gradient of the data-fidelity term at each iteration to reduce the computational burden \cite{soubies2017efficient,pham2020three}.
\item At Line~\ref{alg:FISTA:proximal}, $\text{prox}_{\gamma_\nu\tau}(\vw_\nu)$ denotes the proximal operator evaluated as
\begin{align}
\text{prox}_{\gamma_\nu\tau}(\vw_\nu) = \arg\min_{\vw\in\mathbb{R}^N_{\geq 0}}\Bigg(\frac{1}{2}&\|\vw-\vw_\nu\|_2^2 \nonumber\\
+& \left(\gamma_\nu\tau\right)\mathfrak{R}(\vw)\Bigg).\label{eq:proximalOpt}
\end{align}
Since $\mathfrak{R}(\cdot)$ is TV in our case and $\vw$ is nonnegative, there is no closed-form solution for \eqref{eq:proximalOpt}.
We therefore consider \eqref{eq:proximalOpt} with the fast gradient projection on its dual formulation to address the non-smoothness of TV \cite{beck2009fastTV}.
\item We optimize a non-convex problem because the forward model is nonlinear. 
To the best of our knowledge, there exists no theoretical proof of the global convergence of the accelerated FBS for non-convex problems.
However, we observed that \Cref{alg:FISTA} behaves well for our problem.
The stepsize is empirically set to a fixed value.
\end{itemize}
\begin{algorithm}[t]
 \caption{Accelerated FBS to solve \eqref{eq:MAPInverse}  \cite{beck2009fast,nesterov2013gradient}}            
 \label{alg:FISTA}                 
 \begin{algorithmic}[1]
 \REQUIRE ~\\
 $\vf^{\,0}\in\mathbb{R}^N_{\geq 0},$ stepsize $\gamma_\nu>0$, and $\nu$ is the iteration index. 
 \lastcon $\vf^{\,*}$.
 \STATE $\vv^{1} = \vf^{\,0}.$
 \STATE $\alpha_1 = 1.$
 \STATE $\nu \leftarrow 1.$
 \WHILE{not converged}
 \STATE {Select a subset $\tilde{Q}\subseteq [1\ldots Q].$\label{alg:FISTA:stochasticSelect}}
 \STATE {$\vd^{\nu}=  \sum_{q\in \tilde{Q}} \nabla_{\vf}\mathfrak D_q(\bar{\vf}^{\nu}).$ \label{alg:FISTA:gradient}}
 \STATE {$\vf^{\,\nu}= \text{prox}_{\gamma_\nu\tau}(\bar{\vf}^{\,\nu} - \gamma_\nu\vd^{\nu}).$\label{alg:FISTA:proximal}}
 \STATE $\alpha_{\nu+1}= \frac{1+\sqrt{1+4\alpha_\nu^2}}{2}.$
 \STATE $\bar{\vf}^{\,\nu}= \vf^{\,\nu}+\frac{\alpha_\nu-1}{\alpha_{\nu+1}}\left(\vf^{\,\nu}-\vf^{\,\nu-1}\right).$
 \STATE $\nu\leftarrow \nu+1.$
 \ENDWHILE
 \STATE $\vf^{\,*}= \vf^{\,\nu}.$
 \end{algorithmic}
 \end{algorithm}
The evaluation of the gradient at Line \ref{alg:FISTA:gradient} requires the Jacobian matrix of $\mH_\mathrm{MGH}^q$ which is specified in \Cref{prop:HelmholtzJacobian}. With this formulation, the evaluation of the gradient of the data-fidelity term for the Helmholtz model mainly costs one inversion of the matrix~${\mA^h_\mathrm{Hel}}$, which is again efficiently performed with \Cref{alg:SolverHel}.
\begin{prop}\label{prop:HelmholtzJacobian}
	The Jacobian matrix of~$\mH_\mathrm{MGH}^q$
	\begin{equation}
	\mJ_{\mH_\mathrm{MGH}^q}(\vf)=\left(\mI+\diag(\vf)({\mA^h_\mathrm{Hel}})^{-1}\right)\diag\left(\vu_q(\vf)\right).
	\end{equation}
\end{prop}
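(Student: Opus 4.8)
The plan is to differentiate the composition $\mH_\mathrm{MGH}^q = \tilde{\mG}\circ\big(\vf\mapsto\diag(\vf)\,\vu_q(\vf)\big)$, the only substantive step being the Jacobian of the total field $\vu_q(\vf)=\vu^\mathrm{sc}_{\mathrm{MGH}}(\vf)+\vu^\mathrm{in}_q$ with respect to $\vf$. First I would rewrite the discretized Helmholtz system \eqref{eq:HelmScatterRightIncident:discrete} so that the $\vf$-dependence is exposed on both sides. Since $k_0^2\eta^2(\vx)=f(\vx)+k_0^2\eta_\mathrm{b}^2$ and $f$ is supported on $\Omega$, where the absorbing-layer factor $\alpha$ equals $1$, the operator splits as $\mA^h_\mathrm{Hel}=\mA^h_\mathrm{Hel}(\vf)=\mA_0-\diag(\vf)$, with $\mA_0$ collecting all the $\vf$-independent contributions (the finite-difference Laplacian, the background term $-k_0^2\eta_\mathrm{b}^2$, the ABL, and the Sommerfeld boundary stencil), while the right-hand side is $\vb^h=\diag(\vf)\,\vu^\mathrm{in}_q$, the samples of $f\,u^\mathrm{in}$. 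Throughout I suppress the extension/truncation operators between $\Omega$ and $\Omega_\mathrm{e}$, exactly as the statement does. Because $\vf\mapsto\mA^h_\mathrm{Hel}(\vf)$ is affine and $\mA^h_\mathrm{Hel}(\vf)$ is invertible (this is precisely the system \Cref{alg:SolverHel} solves), the maps $\vf\mapsto(\mA^h_\mathrm{Hel}(\vf))^{-1}$ and hence $\vf\mapsto\vu^\mathrm{sc}_{\mathrm{MGH}}(\vf)=(\mA^h_\mathrm{Hel}(\vf))^{-1}\diag(\vf)\,\vu^\mathrm{in}_q$ are smooth on the relevant open set, so the Jacobian is well defined.

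Next I would extract $\partial\vu_q/\partial\vf$ by implicitly differentiating $\mA^h_\mathrm{Hel}(\vf)\,\vu^\mathrm{sc}_{\mathrm{MGH}}(\vf)=\diag(\vf)\,\vu^\mathrm{in}_q$ along an arbitrary perturbation $\delta\vf\in\mathbb{R}^N$. Using $D_\vf\mA^h_\mathrm{Hel}[\delta\vf]=-\diag(\delta\vf)$ and the product rule,
\begin{equation}
-\diag(\delta\vf)\,\vu^\mathrm{sc}_{\mathrm{MGH}}+\mA^h_\mathrm{Hel}\,D\vu^\mathrm{sc}_{\mathrm{MGH}}[\delta\vf]=\diag(\delta\vf)\,\vu^\mathrm{in}_q .
\end{equation}
Invoking the identity $\diag(\delta\vf)\,\vw=\diag(\vw)\,\delta\vf$ and rearranging,
\begin{equation}
\mA^h_\mathrm{Hel}\,D\vu^\mathrm{sc}_{\mathrm{MGH}}[\delta\vf]=\diag\!\big(\vu^\mathrm{sc}_{\mathrm{MGH}}+\vu^\mathrm{in}_q\big)\delta\vf=\diag\big(\vu_q(\vf)\big)\,\delta\vf ,
\end{equation}
so the Jacobian of the scattered field — hence of $\vu_q$, as $\vu^\mathrm{in}_q$ does not depend on $\vf$ — is $(\mA^h_\mathrm{Hel})^{-1}\diag(\vu_q(\vf))$. (Alternatively, differentiating the closed form with $D[\mX^{-1}]=-\mX^{-1}(\mathrm{d}\mX)\mX^{-1}$ splits $D\vu^\mathrm{sc}_{\mathrm{MGH}}$ into $(\mA^h_\mathrm{Hel})^{-1}\diag(\vu^\mathrm{sc}_{\mathrm{MGH}})\,\delta\vf$ from the inverse and $(\mA^h_\mathrm{Hel})^{-1}\diag(\vu^\mathrm{in}_q)\,\delta\vf$ from $\diag(\vf)\vu^\mathrm{in}_q$, which sum to the same thing.)

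Finally, since $\mH_\mathrm{MGH}^q$ is the fixed linear sensing map $\tilde{\mG}$ applied to $\vf\mapsto\diag(\vf)\,\vu_q(\vf)$, I would apply the product rule to the latter:
\begin{align}
D\big[\diag(\vf)\,\vu_q(\vf)\big][\delta\vf]
&=\diag(\delta\vf)\,\vu_q(\vf)+\diag(\vf)\,D\vu_q[\delta\vf]\nonumber\\
&=\Big(\mI+\diag(\vf)\,(\mA^h_\mathrm{Hel})^{-1}\Big)\diag\big(\vu_q(\vf)\big)\,\delta\vf .
\end{align}
Left-multiplying by $\tilde{\mG}$ (its own Jacobian) and reading off the matrix acting on $\delta\vf$ yields $\mJ_{\mH_\mathrm{MGH}^q}(\vf)=\tilde{\mG}\big(\mI+\diag(\vf)(\mA^h_\mathrm{Hel})^{-1}\big)\diag(\vu_q(\vf))$, which is the asserted expression.

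The step needing genuine care — and the point that distinguishes this model from the LiS model, where the convolution $\mG$ is fixed — is the first one: recognizing that $\mA^h_\mathrm{Hel}$ itself depends, affinely, on $\vf$. Freezing it would give $(\mA^h_\mathrm{Hel})^{-1}\diag(\vu^\mathrm{in}_q)$ for $\partial\vu_q/\partial\vf$ and collapse the Jacobian to $\tilde{\mG}\diag(\vf)(\mA^h_\mathrm{Hel})^{-1}\diag(\vu^\mathrm{in}_q)$, losing both the ``$\mI+{}$'' self-reinforcement term and the replacement of the incident field by the \emph{total} field $\vu_q$. Verifying that $\mA^h_\mathrm{Hel}(\vf)=\mA_0-\diag(\vf)$ holds exactly on the support of $\vf$ — i.e.\ that the ABL multiplier, the boundary stencil, and the extension to $\Omega_\mathrm{e}$ contribute nothing to the derivative there — is the only bookkeeping one must check; everything afterwards is two applications of the product rule.
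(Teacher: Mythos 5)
Your proof is correct and follows essentially the same route as the paper's: both hinge on recognizing that $\mA^h_\mathrm{Hel}$ depends affinely on $\vf$, deduce $D\vu_q[\delta\vf]=({\mA^h_\mathrm{Hel}})^{-1}\diag(\vu_q(\vf))\,\delta\vf$ (you by implicit differentiation of the linear system, the paper by an explicit G\^{a}teaux limit of the perturbed closed form), and finish with the same product rule. The only cosmetic difference is that you carry the fixed sensing matrix $\tilde{\mG}$ through to the final expression, whereas the proposition and the paper's proof drop it.
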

\begin{proof}
Similar to the derivation of the Jacobi matrix of the LiS model in \cite{soubies2017efficient}, the G\^{a}teaux derivative in the direction $\vv\in\mathbb{R}^{N_\mathrm{e}}$ is
\begin{equation}
\begin{array}{rcl}
\mathrm{d}\mH_\mathrm{MGH}^q(\vf;\vv)&=&\lim\limits_{\epsilon\rightarrow 0}\frac{\diag(\vf+\epsilon\vv)\vu_q(\vf+\epsilon\vv)-\diag(\vf)\vu_q(\vf)}{\epsilon}\\
&=&\diag(\vu_q(\vf))\vv \\
&\quad&\quad+\lim\limits_{\epsilon\rightarrow 0}\diag(\vf)\frac{\vu_q(\vf+\epsilon\vv)-\vu_q(\vf)}{\epsilon}.
\end{array}
\end{equation}
Then, for $\vf\rightarrow \vf+\epsilon\vv$, $k_0^2\eta^2(\vx)$ in \eqref{eq:HelmScatterRightIncident} becomes $k_0^2\eta^2(\vx)+\epsilon\diag(\vv)$, which yields
\begin{align}
\vu_q(\vf)&=\vu^\mathrm{in}_q+({\mA^h_\mathrm{Hel}})^{-1}\diag(\vf)\vu_q^\mathrm{in},\nonumber\\
\vu_q(\vf+\epsilon\vv)&=\vu_q^\mathrm{in}+\left(\mA^h_\mathrm{Hel}-\epsilon\diag(\vv)\right)^{-1}
\diag(\vf+\epsilon\vv)\vu_q^\mathrm{in}.
\end{align}
Then, we have that
 \begin{equation}
 \label{eq:proofdiff}
 \vu_q(\vf+\epsilon\vv)-\vu_q(\vf)=\left(\mA^h_\mathrm{Hel}-\epsilon\diag(\vv)\right)^{-1}\diag(\epsilon\vv)\vu_q(\vf).
\end{equation}
Substituting \eqref{eq:proofdiff} into $\mathrm{d}\mH^q_\mathrm{MGH}(\vf;\vv)$ and taking the limit, we get the desired result
\begin{equation}
\begin{array}{rl}
\mathrm{d}\mH^q_\mathrm{MGH}(\vf;\vv)=&\diag(\vu_q(\vf))\vv+\diag(\vf)({\mA^h_\mathrm{Hel}})^{-1}\\
&\times\diag(\vu_q(\vf))\vv\\
 = & \left(\mI+\diag(\vf)({\mA^h_\mathrm{Hel}})^{-1}\right)\diag(\vu_q(\vf)) \vv.
\end{array}
\end{equation}
\end{proof}

\section{Numerical Experiments}\label{sec:Exps}
In the first set of experiments, we compare the total fields obtained by the LiS and Helmholtz models.
We choose samples for which analytical solutions exist.
In the second set of experiments, we compare the performance of the LiS and Helmholtz models on an inverse-scattering problem with simulated and real data. Note that Bi-CGSTAB is used to solve~\eqref{eq:invertLSm} (MATLAB built-in function \emph{bicgstab}).
The whole implementation is based on GlobalBioIm~\cite{soubies2019pocket} and was performed on a laptop with Intel Core i$9$ $2.3$GHz.
The algorithm for \eqref{eq:invertLSm} and \eqref{eq:HelmScatterRightIncident:discrete} is said to have converged when the relative error reaches~$10^{-6}$.

Despite that one can take advantage of parallelization~\cite[Chapter $6$]{trottenberg2000multigrid} or GPU acceleration~\cite{knibbe2011gpu} for MG methods,
we have implemented our MG method in MATLAB without parallelization or GPU acceleration.
In return, when several threads are available,
the LiS method takes advantage of the parallelized implementation of the FFT in MATLAB.
Thus, to provide a fair comparison, we run both MG and LiS methods with only one CPU thread.

In our experiments, we run one V-cycle to apply~$\mK_{MG}^{-1}$ and perform one pre- and post-relaxation ($\nu_1=\nu_2=1$). We choose the damped Jacobi relaxation with $\omega_\mathrm{S}=0.8$ as the local smoother. Moreover, additional $\frac{\sqrt{N}}{8}$ points are added at each side as the ABL (\ie{} ${N_\mathrm{e}}=\frac{25N}{16}$) with $\beta=0.15$ for the first set of experiments.
For the inverse-scattering problems, we use $\frac{\sqrt{N}}{16}$ points as the ABL and set $\beta=0$.

\subsection{Robustness and Efficiency}
\label{sec:sub:RobuEffMG}
In this part, we consider a disk with RI $\eta_\mathrm{disk}$ immersed in air~($\eta_\mathrm{b}=1$, \Cref{fig:beadsetting}).
For such objects, there exists an analytic expression of the total field~\cite{devaney2012mathematical},
which allows us to study the accuracy of the total field obtained by the LiS and Helmholtz methods.
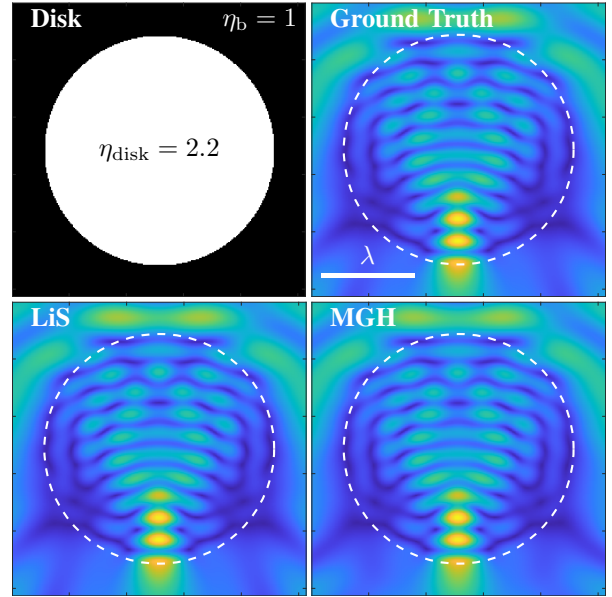
\begin{figure}
    \centering
    \begin{tikzpicture}
    \newcommand{\insep}{2}
    \newcommand{\foldoi}{figs/Exp1_1.00_2.20_rad_1.25}
    \newcommand{\mywidth}{0.25*\textwidth}
    \pgfmathsetmacro{\winsz}{5*256/406}
    \begin{groupplot}[
     group style = {group size = 2 by 2,
     vertical sep=\insep, horizontal sep=\insep},
     xmin = 0,xmax = \winsz, ymin = 0, ymax = \winsz,
    enlargelimits=false,
    axis equal image,
    scale only axis,
    width = {\mywidth},
    hide axis,
    title style = {anchor=base}  
    ]
    \nextgroupplot[
        xmin=0, xmax=\winsz, ymin=0, ymax=\winsz,
        axis equal image,
        hide axis
        ]
    \addplot graphics[xmin=0,xmax=\winsz,ymin=0,ymax=\winsz,includegraphics={trim=0.72cm 0.53cm  1.43cm 0.11cm, clip}] {{\foldoi/nBeads.eps}};
    \node[anchor=north west,white,text width=6cm] at (axis cs:0.1,1.01*\winsz) {\textbf{Disk}};
    \node[white,anchor=north east] at (axis cs:\winsz,1.01*\winsz) {${\eta_\mathrm{b}=1}$};
    \node at (axis cs:\winsz/2,\winsz/2) {${\eta_\mathrm{disk}=2.2}$};
    
    \nextgroupplot
    \addplot graphics[xmin=0,xmax=\winsz,ymin=0,ymax=\winsz,includegraphics={trim=0.72cm 0.53cm  1.43cm 0.11cm, clip}] {{\foldoi/MieBead.eps}};
    \node[anchor=north west,white,text width=6cm] at (axis cs:0.1,1.01*\winsz) {\textbf{Ground Truth}};
    \fill [white] (axis cs:0.1,0.2) rectangle (axis cs:1.1,0.25);
    \draw [white] (axis cs:0.6,0.25) node[anchor=south] {$\lambda$};
    \draw [dashed,white,thick] (axis cs:\winsz/2,\winsz/2) circle[radius=1.23152709359606];

    \nextgroupplot[enlargelimits=false]
    \addplot graphics[xmin=0,xmax=\winsz,ymin=0,ymax=\winsz,includegraphics={trim=0.72cm 0.53cm  1.43cm 0.11cm, clip}] {{\foldoi/LSmBead.eps}};
    \node[anchor=north west,white,text width=6cm] at (axis cs:0.1,1.01*\winsz) {\textbf{LiS}};
    \draw [dashed,white,thick] (axis cs:\winsz/2,\winsz/2) circle[radius=1.23152709359606];
    
    \nextgroupplot[enlargelimits=false]
    \addplot graphics[xmin=0,xmax=\winsz,ymin=0,ymax=\winsz,includegraphics={trim=0.72cm 0.53cm  1.43cm 0.11cm, clip}] {{\foldoi/HelmBead.eps}};
    \node[anchor=north west,white,text width=6cm] at (axis cs:0.1,1.01*\winsz) {\textbf{MGH}};
    \draw [dashed,white,thick] (axis cs:\winsz/2,\winsz/2) circle[radius=1.23152709359606];
    \end{groupplot}
    \end{tikzpicture}
    \caption{Total fields of a disk with radius $= 1.25\lambda$ and a RI of $\eta_\mathrm{disk}=2.2$.
    The disk is immersed in air~($\eta_\mathrm{b}=1$) and illuminated by a plane wave of wavelength~$\lambda=10$cm.
    The displayed fields are in a square area of length~$= 3.2\lambda$
    and are obtained through an analytical solution~\cite{devaney2012mathematical}, the LiS, and the Helmholtz methods with $N=256^2,~h=0.125$cm. The corresponding relative error of the LiS and Helmholtz models are $8.1\times 10^{-3}$ and $7.5\times10^{-3}$, respectively.}
    \label{fig:beadsetting}
\end{figure}

The disk is illuminated from the top by a plane wave of wavelength~$\lambda = 10$ cm. Our region of interest is a square area of length~$3.2\lambda$~(\Cref{fig:beadsetting}). A total of $N=256^2$ samples are used to discretize the domain~(\ie{} $h=0.125$ cm).
Denote by
$$
\epsilon = \frac{\|\vu-\vu_\mathrm{GT}\|^2}{\|\vu_\mathrm{GT}\|^2}
$$
the relative error where $\vu$ is the estimated total field and $\vu_\mathrm{GT}$ is the ground truth. From \Cref{fig:beadsetting}, one observes that both models yield an accurate total field with low relative error ($8.1\times 10^{-3}$ and $7.5\times 10^{-3}$ for the LiS and Helmholtz models, respectively).

To study the efficiency of the Helmholtz model with the proposed MG method,
we perform a series of experiments similar to the previous one, but with diverse sets of contrasts ($\max(|\vf|)/k_0^2\eta_\mathrm{b}^2$) and radii.
We adopt the same square domain, wavelength, RI of the background, and source position as were shown in~\Cref{fig:beadsetting}. Three levels are used for the MG method.

The number of iterations and the computational time to converge is provided in \Cref{fig:showRoubEffMie256}.
We see that the LiS model takes more time to converge when the contrast or the radius of the sample increases, which corresponds to the most challenging cases.
In comparison, the Helmholtz model constantly performs well,
which suggests that the proposed method is robust.

\begin{figure}
	\centering
	\newcommand{\yminIter}{2}
	\newcommand{\ymaxIter}{680}
	\newcommand{\yminTime}{0.1}
	\newcommand{\ymaxTime}{25}
	\begin{tikzpicture}
	\pgfplotsset{every axis legend/.append style={legend pos=north west,
anchor=north west,font=\scriptsize, legend cell align={left}}}
\pgfplotsset{grid style={dotted, gray}}
	
	\begin{groupplot}[enlargelimits=false,group style={group size=2 by 2,horizontal sep = 1em,vertical sep=1.5em},scale only axis,
	width=0.2*\textwidth,
	every axis/.append style={ymode=log,font=\scriptsize,title style={anchor=base,yshift=-1mm}, x label style={yshift = 0.5em}, y label style={yshift = -.5em},grid = both}]
	
	\nextgroupplot[xlabel = {},ylabel=Iterations,title={Radius $=1.25\lambda$},ymin=\yminIter,ymax=\ymaxIter]
	\addplot[dashed,darkblue,line width=1pt] table [search path={figs/Exp1_Contrast},x=ContrastLevel, y=LiSNumIteration, col sep=comma] {ResultsContrast256.csv};
	\addplot[solid,red,line width=1pt] table [search path={figs/Exp1_Contrast},x=ContrastLevel, y=HelmholtzIteration, col sep=comma] {ResultsContrast256.csv};
	\legend{LiS, MGH}
	
	\nextgroupplot[xlabel = {},ylabel={},title={Contrast $=2$},ymin=\yminIter,ymax=\ymaxIter,yticklabels={,,}]
	\addplot[dashed,darkblue,line width=1pt] table [search path={figs/Exp1_Radius},x=RadiusLevel, y=LiSNumIteration, col sep=comma] {ResultsRadius256.csv};
	\addplot[solid,red,line width=1pt] table [search path={figs/Exp1_Radius},x=RadiusLevel, y=HelmholtzIteration, col sep=comma] {ResultsRadius256.csv};
	\legend{LiS, MGH}
	
    \nextgroupplot[xlabel = Contrast~$\max(|\vf|)/k_0^2\eta_\mathrm{b}^2$, ylabel=CPU Time (seconds),title={},legend style={font=\scriptsize},ymin=\yminTime,ymax=\ymaxTime]
	\addplot[dashed,darkblue,line width=1pt] table [search path={figs/Exp1_Contrast},x=ContrastLevel, y=LiSCPUTime1-thread, col sep=comma] {ResultsContrast256.csv};
	\addplot[solid,red,line width=1pt] table [search path={figs/Exp1_Contrast},x=ContrastLevel, y=HelmholtzCPUTime1-thread, col sep=comma] {ResultsContrast256.csv};
	\legend{LiS,MGH}
	
    \nextgroupplot[xlabel = Radius~($\lambda$), ylabel={},title={},legend style={font=\scriptsize},ymin=\yminTime,ymax=\ymaxTime,yticklabels={,,}]
	\addplot[dashed,darkblue,line width=1pt] table [search path={figs/Exp1_Radius},x=RadiusLevel, y=LiSCPUTime1-thread, col sep=comma] {ResultsRadius256.csv};
	\addplot[solid,red,line width=1pt] table [search path={figs/Exp1_Radius},x=RadiusLevel, y=HelmholtzCPUTime1-thread, col sep=comma] {ResultsRadius256.csv};
	\legend{LiS,MGH}
    \end{groupplot}
	\end{tikzpicture}
	
	\caption{Number of iterations (top line) and CPU time (bottom line) \emph{versus} contrast (left column) and radius (right column) for the Lippmann-Schwinger and Helmholtz models.
	The domain is discretized with $N=256^2$ points and the mesh-size~$h=0.125$cm.}
	\label{fig:showRoubEffMie256}
\end{figure}

Next, we discretize the same domain with $N=1024^2$, which results in a large-scale problem.
From \Cref{fig:showRoubEffMie1024}, we see that Bi-CGSTAB for the LiS method requires more iterations to converge, which is similar to the phenomenon observed in \Cref{fig:showRoubEffMie256}.
Regarding the computational time, the LiS method can be $20$ times slower than for the case $N=256^2$ (\eg{} contrast or radius larger than $3$ or $1.2\lambda$, respectively).
On the contrary, the increase of the computational time of the Helmholtz method is moderate for $N=1024^2$ because we used more levels for the MG method.
Indeed, this feature improves the convergence speed at the price of a slightly increased computational cost, as discussed in \Cref{sec:MGSolverHelmholtz:sub:MGMethods}.

\begin{figure}
	\centering
	\newcommand{\yminIter}{2}
	\newcommand{\ymaxIter}{1.1e3}
	\newcommand{\yminTime}{1}
	\newcommand{\ymaxTime}{360}
	\begin{tikzpicture}
	\pgfplotsset{every axis legend/.append style={legend pos=north west,
anchor=north west,font=\scriptsize, legend cell align={left}}}
	\pgfplotsset{grid style={dotted, gray}}
	\begin{groupplot}[enlargelimits=false,group style={group size=2 by 2,horizontal sep = 1em,vertical sep=1.5em},
	width=0.2*\textwidth,
	scale only axis,
	every axis/.append style={ymode=log,font=\scriptsize,title style={anchor=base,yshift=-1mm}, x label style={yshift = 0.5em}, y label style={yshift = -.5em}, grid = both }]
	
	\nextgroupplot[xlabel = {},ylabel=Iterations,title={Radius $=1.25\lambda$},ymin=\yminIter,ymax=\ymaxIter]
	\addplot[dashed,darkblue,line width=1pt] table [search path={figs/Exp1_Contrast},x=ContrastLevel, y=LiSNumIteration, col sep=comma] {ResultsContrast1024.csv};
	\addplot[solid,red,line width=1pt] table [search path={figs/Exp1_Contrast},x=ContrastLevel, y=HelmholtzIteration, col sep=comma] {ResultsContrast1024.csv};
	\legend{LiS, MGH}
	
	\nextgroupplot[xlabel = {},ylabel={},title={Contrast $=2$},ymin=\yminIter,ymax=\ymaxIter,yticklabels={,,}]
	\addplot[dashed,darkblue,line width=1pt] table [search path={figs/Exp1_Radius},x=RadiusLevel, y=LiSNumIteration, col sep=comma] {ResultsRadius1024.csv};
	\addplot[solid,red,line width=1pt] table [search path={figs/Exp1_Radius},x=RadiusLevel, y=HelmholtzIteration, col sep=comma] {ResultsRadius1024.csv};
	\legend{LiS, MGH}
	
    \nextgroupplot[xlabel = Contrast $\max(|\vf|)/k_0^2\eta_\mathrm{b}^2$, ylabel=CPU Time (seconds),title={},legend style={font=\scriptsize},ymin=\yminTime,ymax=\ymaxTime]
	\addplot[dashed,darkblue,line width=1pt] table [search path={figs/Exp1_Contrast},x=ContrastLevel, y=LiSCPUTime1-thread, col sep=comma] {ResultsContrast1024.csv};
	\addplot[solid,red,line width=1pt] table [search path={figs/Exp1_Contrast},x=ContrastLevel, y=HelmholtzCPUTime1-thread, col sep=comma] {ResultsContrast1024.csv};
	\legend{LiS,MGH}
	
    \nextgroupplot[xlabel = Radius~($\lambda$), ylabel={},title={},legend style={font=\scriptsize},ymin=\yminTime,ymax=\ymaxTime,yticklabels={,,}]
	\addplot[dashed,darkblue,line width=1pt] table [search path={figs/Exp1_Radius},x=RadiusLevel, y=LiSCPUTime1-thread, col sep=comma] {ResultsRadius1024.csv};
	\addplot[solid,red,line width=1pt] table [search path={figs/Exp1_Radius},x=RadiusLevel, y=HelmholtzCPUTime1-thread, col sep=comma] {ResultsRadius1024.csv};
	\legend{LiS,MGH}
    \end{groupplot}
	\end{tikzpicture}
	
	
	\caption{Number of iterations (top line) and CPU time (bottom line) \emph{versus} contrast (left column) and radius (right column) for the Lippmann-Schwinger and Helmholtz models.
	The domain is discretized with $N=1024^2$ points and the mesh-size~$h=0.0312$cm.}
	\label{fig:showRoubEffMie1024}
\end{figure}

\subsection{Inverse Scattering with Simulated Data}
\label{sec:Exps:sub:SimulatedData}
In this section, we solve an inverse-scattering problem with simulated data.
We generated a synthetic image (\Cref{fig:SimulatedDataSample}) with contrast $0.355$ and size $4.5\lambda$, immersed in air ($\eta_\mathrm{b}=1$).
We illuminate the sample with plane waves of wavelength $\lambda=3$cm.
Simulations were conducted on a fine grid ($N=1024^2$) with square pixel of length~$4.4\times10^{-3}\lambda$ using the LiS and Helmholtz models.
We simulated $35$ illuminations that were uniformly distributed around the object and placed $360$ detectors around the object at a distance of $25$cm from the center, but recorded only the $120$ detectors that were the farthest from the illumination source. In total, we obtained $35\times 120$ measurements.

\begin{figure}
	\centering
	\includegraphics[scale=0.5]{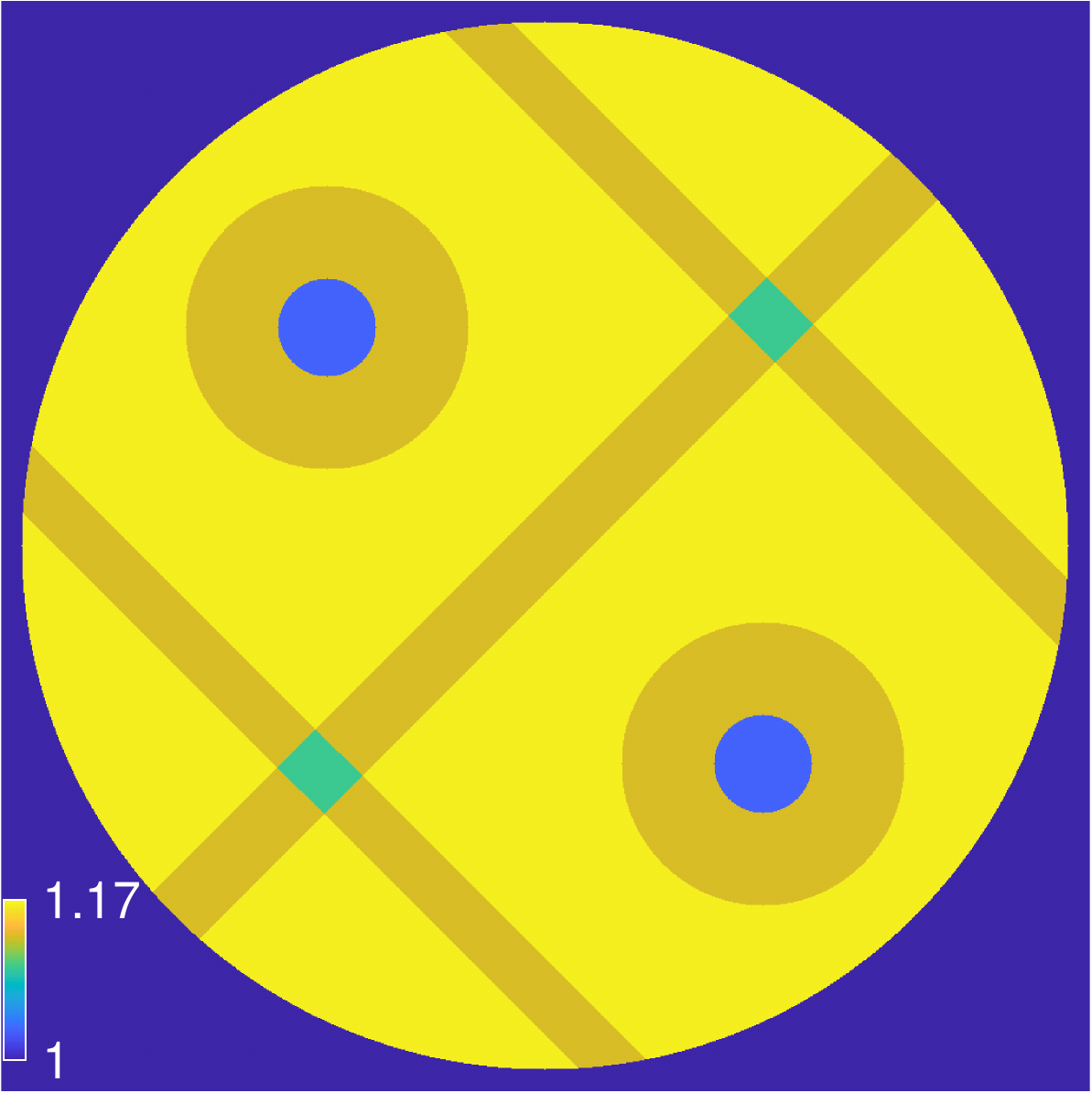}
	\caption{RI of the sample in the simulated experiment. The contrast is $35.5\%$.}
	\label{fig:SimulatedDataSample}
\end{figure}

For the reconstruction, we considered two different grids: $N=256^2$ with square pixel of length~$1.76\times 10^{-2}\lambda$ and $N=512^2$ with square pixel of length~$8.8\times 10^{-3}\lambda$.
For the reconstructed algorithm, $250$~iterations were performed. The stepsize $\gamma$ and regularization parameter $\tau$ are summarized in \Cref{tab:simulatedParameters}. Moreover, only six measurements were randomly selected to evaluate the gradient at each iteration.
We define the signal-to-noise ratio (SNR) as
\begin{equation}
\mathrm{SNR}(\boldsymbol{\eta}^*,\boldsymbol{\eta}_\mathrm{true}) = 20\log_{10} \frac{\|\boldsymbol{\eta}_\mathrm{true}\|}{\|\boldsymbol{\eta}_\mathrm{true}-\boldsymbol{\eta}^*\|}\mathrm{dB},
\end{equation}
where $\boldsymbol{\eta}^*$ is the reconstructed RI. To compare the reconstruction on different discretizations, we computed the SNR on the finest grid ($N=1024^2$) by upsampling the reconstructed sample.

\begin{table}
\caption{Stepsizes and regularization parameter on the simulated data for $N=256^2$ and $512^2$.}
\begin{center}
	\begin{tabular}{c|c c|c c}
 \hline
 \hline
 \multirow{2}{*}{$N$}& \multicolumn{2}{c|}{LiS}&\multicolumn{2}{c}{MGH}\\
 &$\gamma$&$\tau$&$\gamma$&$\tau$\\
 
 \hline
 $256^2$&$8.5\times 10^{-4}$&$3.5\times 10^{-3}$&$9\times 10^{-4}$&$4.5\times 10^{-3}$\\
 $512^2$&$4.2\times 10^{-4}$&$7.3\times 10^{-3}$&$3.2\times 10^{-4}$&$9.5\times 10^{-3}$\\
 \hline
 \hline
\end{tabular}
\end{center}
\label{tab:simulatedParameters}
\end{table}

We present in \Cref{fig:SimulatedDataRecoSNRCPUTime} the SNR and CPU time
for both LiS and Helmholtz models with different grids.
The fine discretization ($N=512^2$) yields an SNR higher than the coarser grid ($N=256^2$) does, which shows the influence of the discretization in the reconstruction.
Moreover, our visual assessment in~\Cref{fig:SimulatedDataReco} corroborates the quantitative comparison.
For the LiS method, Bi-CGSTAB needs only about twenty iterations to converge because the contrast is mildly hard.
We still observe that the Helmholtz method needs less CPU time than the LiS method for $N=256^2$. 
For $N=512^2$, we see that the Helmholtz method is faster than the LiS method, which illustrates well the advantage of our method for large $N$.

\begin{figure}
\centering
%
%
%
%
	\begin{tikzpicture}[spy using outlines={circle,lens={scale=2}, size=1.5cm, connect spies}]
	\pgfplotsset{every axis legend/.append style={legend pos=south east, anchor=south east,
	legend cell align={left}}}
	\pgfplotsset{grid style={dotted, gray}}
	\begin{groupplot}[enlargelimits=false,group style={group name=mygroup,group size=1 by 1},
	width=0.4*\textwidth,
	height=0.15*\textwidth,
	every axis/.append style={
	title style={anchor=base,yshift=-1mm}, x label style={yshift = 0.5em}, y label style={yshift = .0em}, grid = both },scale only axis,ymax=38]
	\nextgroupplot[xlabel = {Iterations},ylabel={SNR (dB)},title={}]
	\addplot[dashed,very thick,darkblue,line width=1pt] table [search path={figs/Exp2Simulated_1.00_1.16_rad_3.00},x={Iterations}, y=LiS_SNR256GridsBicubic, col sep=comma] {ResultsSimulatedSNR.csv};
	\addplot[dashdotted,very thick,darkblue,line width=1pt] table [search path={figs/Exp2Simulated_1.00_1.16_rad_3.00},x={Iterations}, y=LiS_SNR512GridsBicubic, col sep=comma] {ResultsSimulatedSNR.csv};
	\addplot[dotted,very thick,red,line width=1pt] table [search path={figs/Exp2Simulated_1.00_1.16_rad_3.00},x={Iterations}, y=Helmholtz_SNR256GridsBicubic, col sep=comma] {ResultsSimulatedSNR.csv};
	\addplot[solid,very thick,red,line width=1pt] table [search path={figs/Exp2Simulated_1.00_1.16_rad_3.00},x={Iterations}, y=Helmholtz_SNR512GridsBicubic, col sep=comma] {ResultsSimulatedSNR.csv};
	\legend{LiS ($N=256^2$),LiS ($N=512^2$),MGH ($N=256^2$),MGH ($N=512^2$)}
	\end{groupplot}
	
	\end{tikzpicture}
	
	\begin{tikzpicture}
  \begin{axis}[
    xbar,
    width=0.475*\textwidth,
	height=0.2*\textwidth, enlarge y limits=0.55,
    xlabel={CPU Time (seconds)}, xmin=0, xmax = 11500,
    symbolic y coords={1},
    legend style={at={(0.5,-0.48)},
        anchor=north,legend columns=2},
    ytick=data,yticklabels={,,},
    nodes near coords, nodes near coords align={horizontal},every node near coord/.append style={
    /pgf/number format/fixed zerofill,
    /pgf/number format/precision=0
}
    ]
    \addplot coordinates {(2556.388156078,1)}; 
    \addplot coordinates {(1883.656180455,1)}; 
    
    \addplot coordinates {(9701.511226183,1)}; 
    \addplot coordinates {(6895.363631848,1)}; 
    
    \legend{LiS ($N=256^2$),MGH ($N=256^2$),LiS ($N=512^2$),MGH ($N=512^2$)}
  \end{axis}
\end{tikzpicture}
	\caption{SNR (top) and CPU time (bottom) with the simulated data for $N=256^2$ and $512^2$.}
	\label{fig:SimulatedDataRecoSNRCPUTime}
\end{figure}
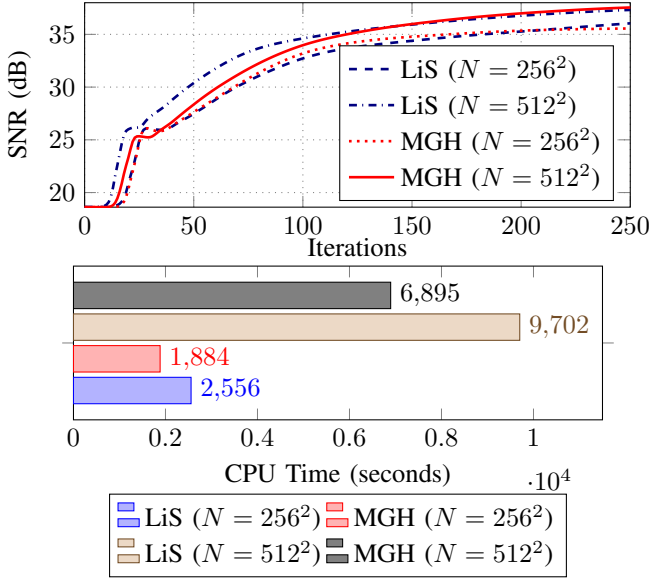

\begin{figure}
	\centering
	\subfigure[LiS: $N=256^2$.]{\includegraphics[scale=0.33]{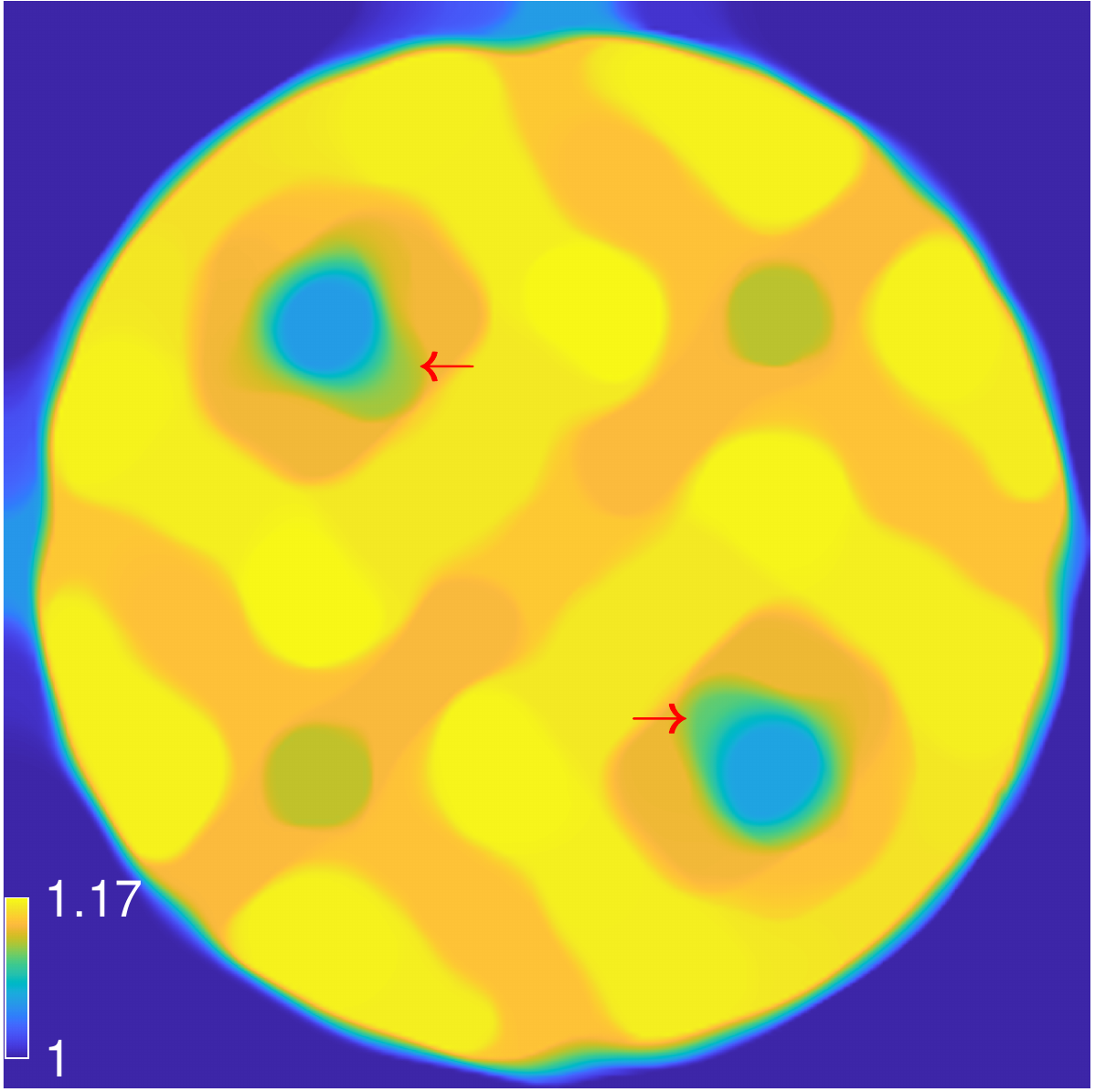}}
	\subfigure[MGH: $N=256^2$.]{\includegraphics[scale=0.33]{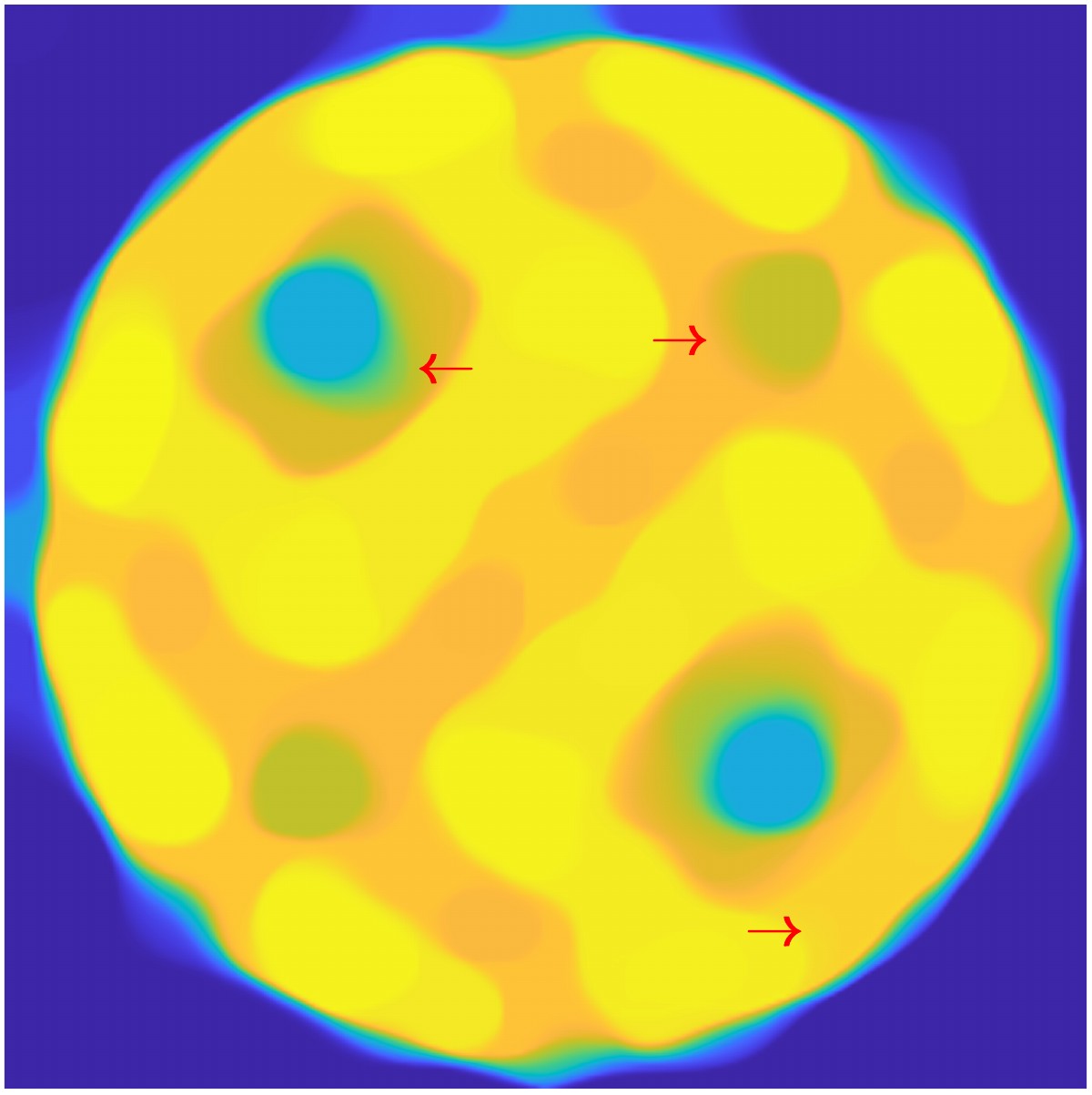}}
	
	\subfigure[LiS: $N=512^2$.]{\includegraphics[scale=0.33]{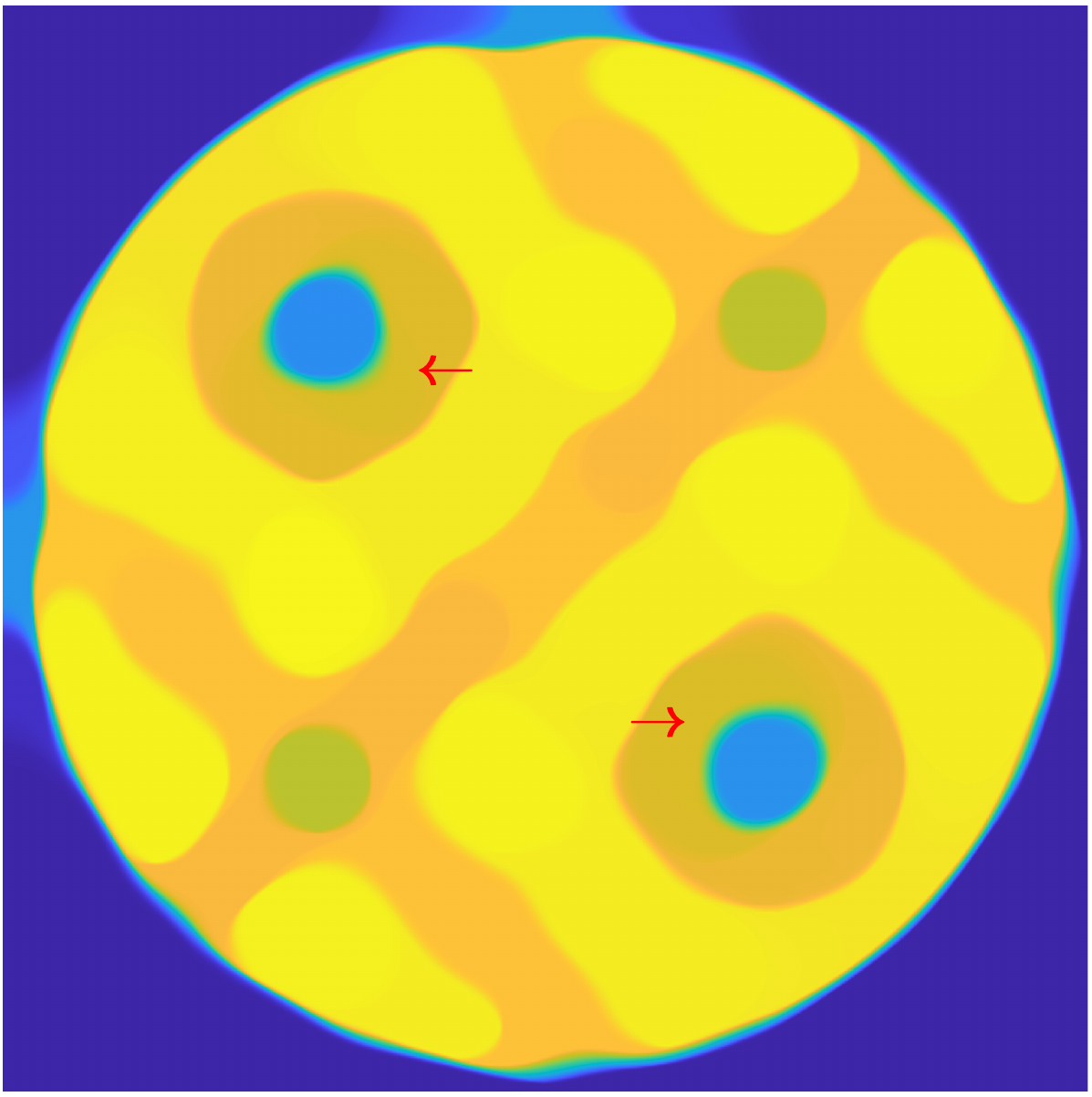}}
	\subfigure[MGH: $N=512^2$.]{\includegraphics[scale=0.33]{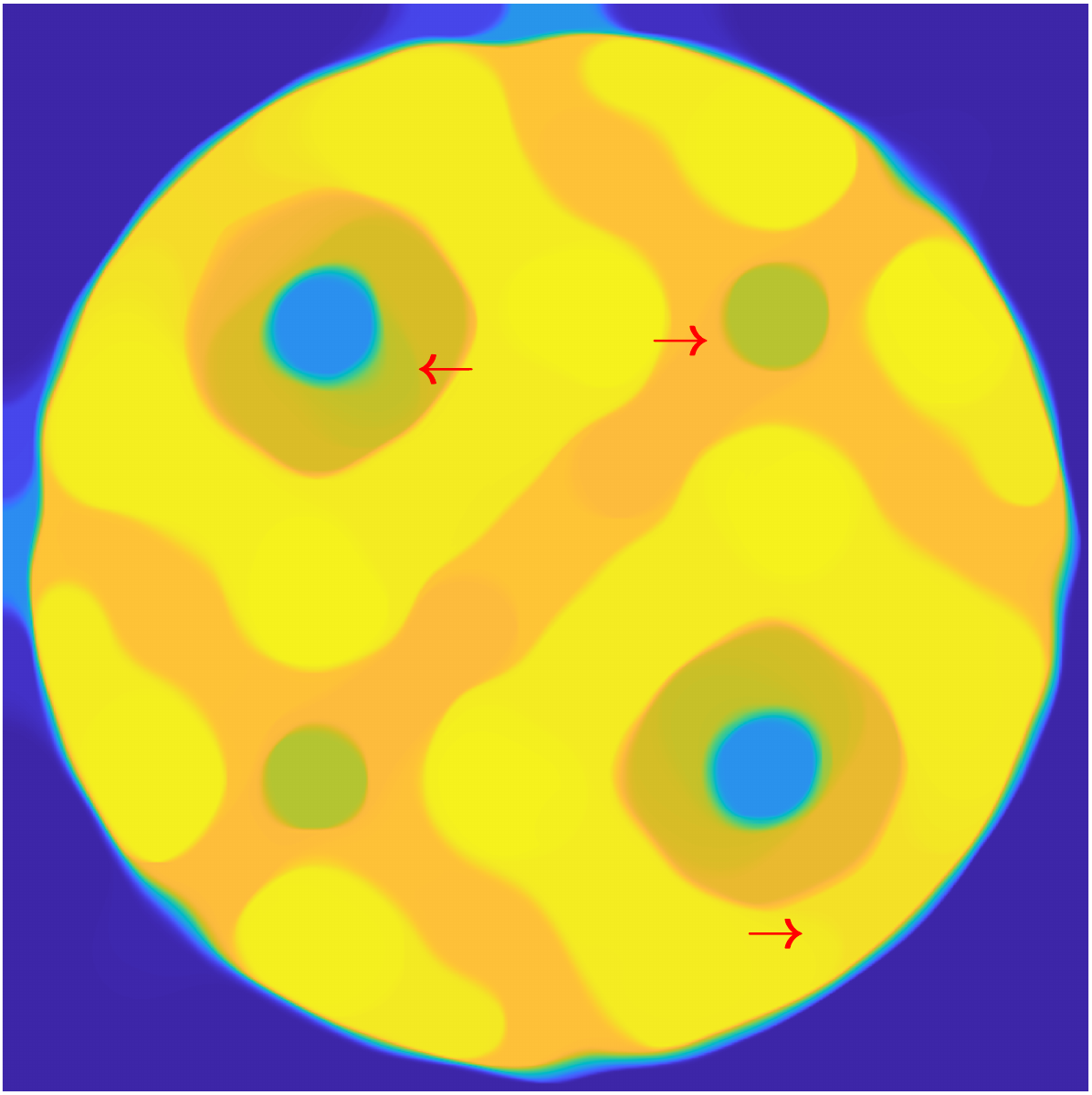}}
	\caption{Reconstructed RIs of the object on $N=256^2$ and $512^2$ grids. All targets are upsampled to a $(1024\times1024)$ grid.}
	\label{fig:SimulatedDataReco}
\end{figure}

\subsection{Inverse Scattering with Experimental Data}
\label{sec:Exps:sub:ExpData}
Now, we study the performance of the Helmholtz model to recover the RIs of three real targets (namely \emph{FoamDielExtTM}, \emph{FoamDielintTM}, and \emph{FoamTwinDielTM}) from the public database provided by the Fresnel Institute \cite{geffrin2005free}.
The samples are fully enclosed in a square domain of length $15$cm.
We discretized the domain over a $(256\times 256)$ grid in our reconstruction.
The sensors were placed circularly around the object at a distance of $1.67$m from its center with a total of $360$ sensors.
Eight (eighteen, respectively) sources  for  \emph{FoamDielExtTM} and \emph{FoamDielintTM} (\emph{FoamTwinDielTM}, respectively) were put uniformly around the object and activated sequentially.
For each activated source, only the $241$ farthest sensors were activated.
In total, $(8\times 241)$ ($(18\times241)$, respectively) measurements for the \emph{FoamDielExtTM} and \emph{FoamDielintTM} (\emph{FoamTwinDielTM}, respectively) targets were obtained.
We used four frequencies of illumination ($3,5,6,8$GHz) to reconstruct the samples, resulting in a total of $(4\times8\times 241)$ measurements~($(4\times18\times 241)$ measurements for \emph{FoamTwinDielTM}).
The expected RIs of the three samples are presented in \Cref{fig:GroundTruthThreeTargets} as reference. 

\begin{figure}
    \centering
   \subfigure[\emph{FoamDielExtTM}]{\includegraphics[scale=0.33]{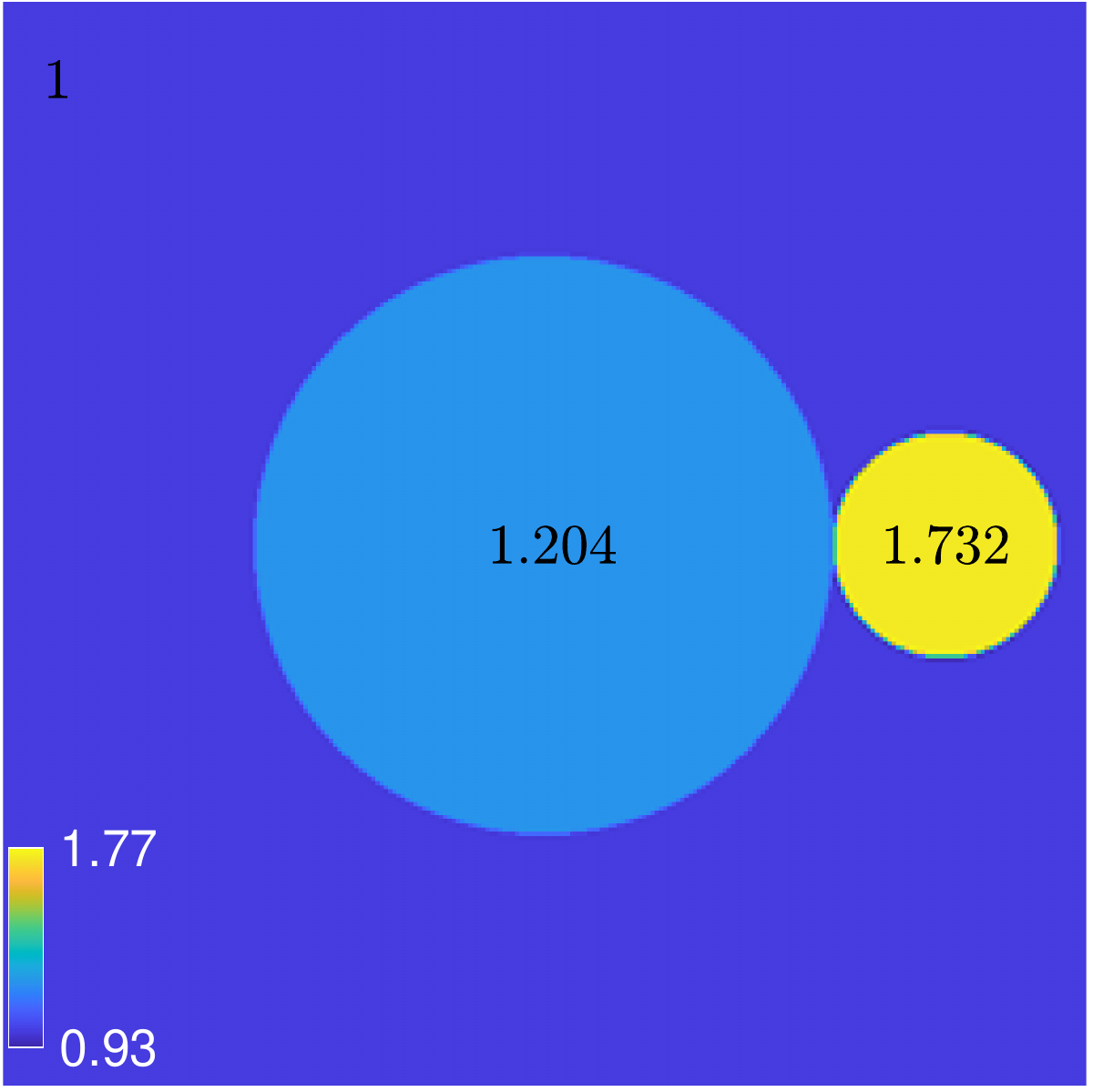}}
   \subfigure[\emph{FoamDielintTM}]{\includegraphics[scale=0.33]{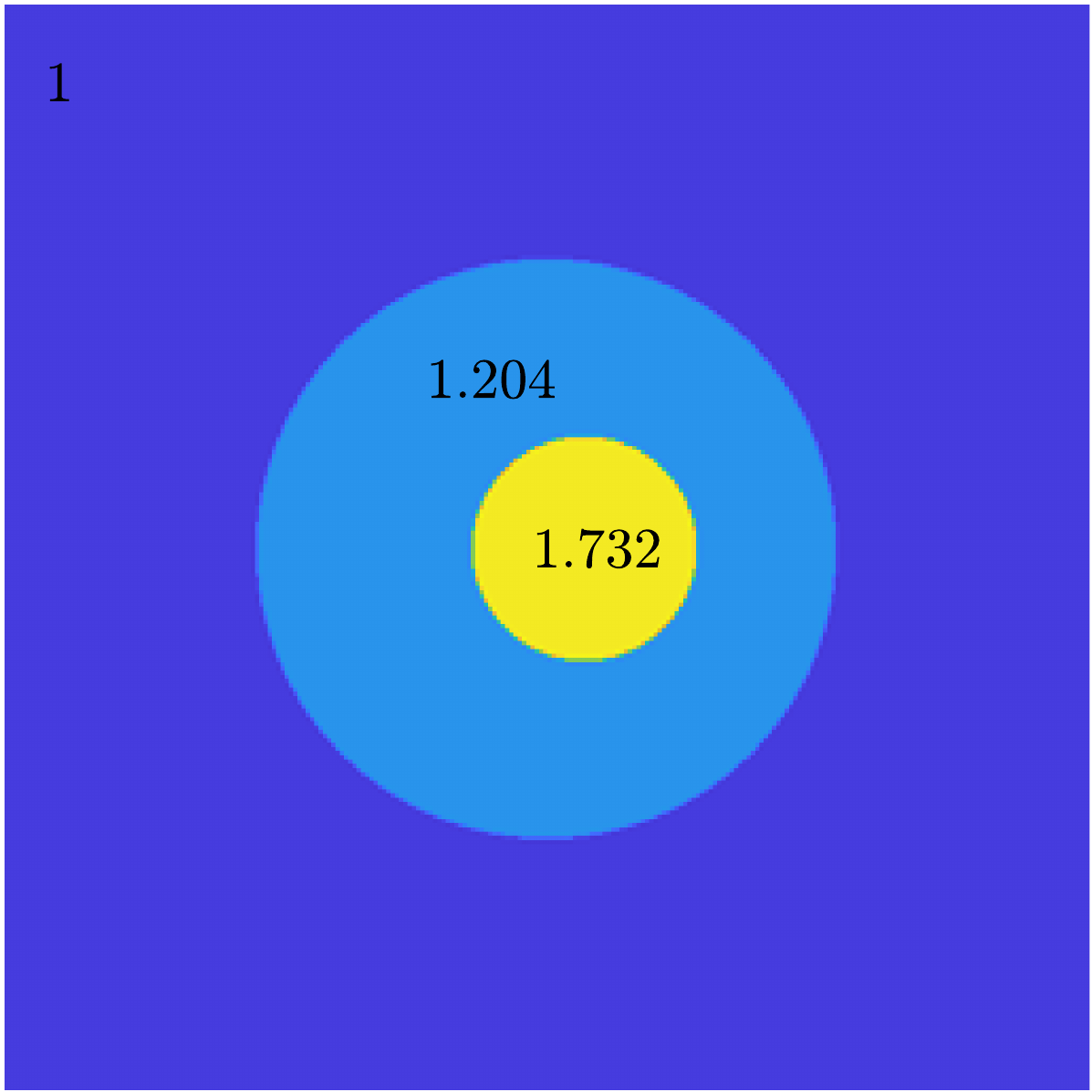}}
   
   \subfigure[\emph{FoamTwinDielTM}]{\includegraphics[scale=0.33]{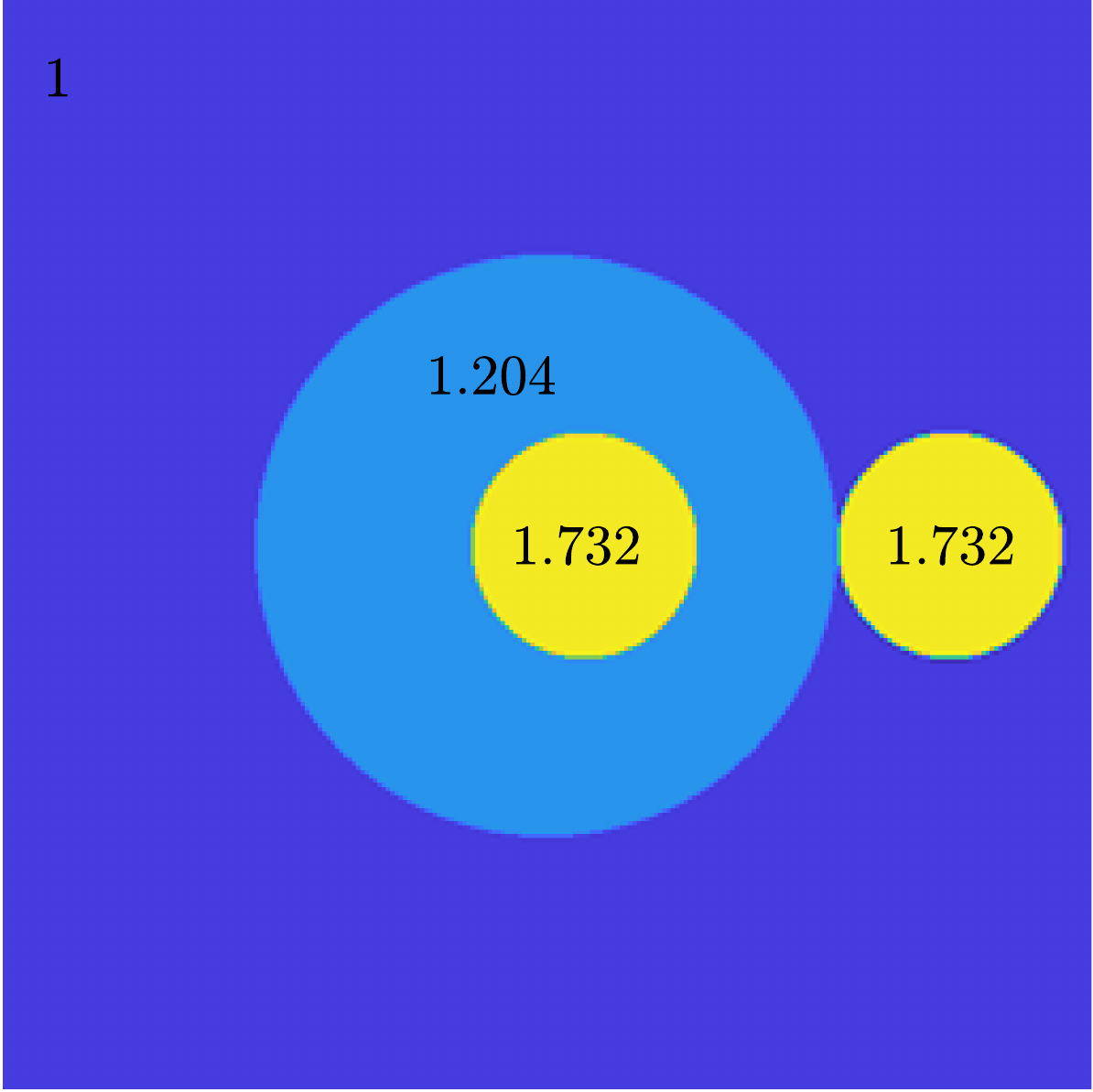}}
    \caption{RIs of three real targets in the Fresnel database.}
    \label{fig:GroundTruthThreeTargets}
\end{figure}

For the reconstruction, we randomly selected a fourth of the measurements to evaluate the gradient at each iteration and performed $150$ iterations.
The stepsize and regularization parameter are summarized in \Cref{tab:experimentalParameters}. From \Cref{fig:RealDataRecov_FoamDielExtTM,fig:RealDataRecov_FoamDielintTM,fig:RealDataRecov_FoamTwinDielTM}, we see that both the LiS and Helmholtz models successfully recover the RIs of real targets with similar performance. Moreover, we observe that the Helmholtz model with the proposed MG solver is faster than the LiS model for all three targets, thus demonstrating the efficiency of our method.

\begin{table}
\caption{Stepsizes and regularization parameter on the experimental data for $N=256^2$.}
\begin{center}
\begin{adjustbox}{scale=0.95}
	\begin{tabular}{c|c c|c c}
 \hline
 \hline
 \multirow{2}{*}{\scriptsize Target}& \multicolumn{2}{c|}{LiS}&\multicolumn{2}{c}{MGH}\\
 &$\gamma$&$\tau$&$\gamma$&$\tau$\\
 \hline
 \emph{\scriptsize FoamDielExtTM}&$4\times 10^{-4}$&$9\times 10^{-3}$&$1.1\times 10^{-3}$&$8.1\times 10^{-3}$\\
 \emph{\scriptsize FoamDielintTM}&$4\times 10^{-4}$&$1.9\times 10^{-2}$&$1\times 10^{-3}$&$7\times 10^{-3}$\\
 \emph{\scriptsize FoamTwinDielTM}&$3\times 10^{-4}$&$1\times 10^{-2}$&$7\times 10^{-4}$&$7.5\times 10^{-3}$\\
 \hline
 \hline
\end{tabular}
\end{adjustbox}
\end{center}
\label{tab:experimentalParameters}
\end{table}

\begin{figure}
	\centering
	\subfigure[LiS SNR: $26.95$dB.]{\includegraphics[scale=0.33]{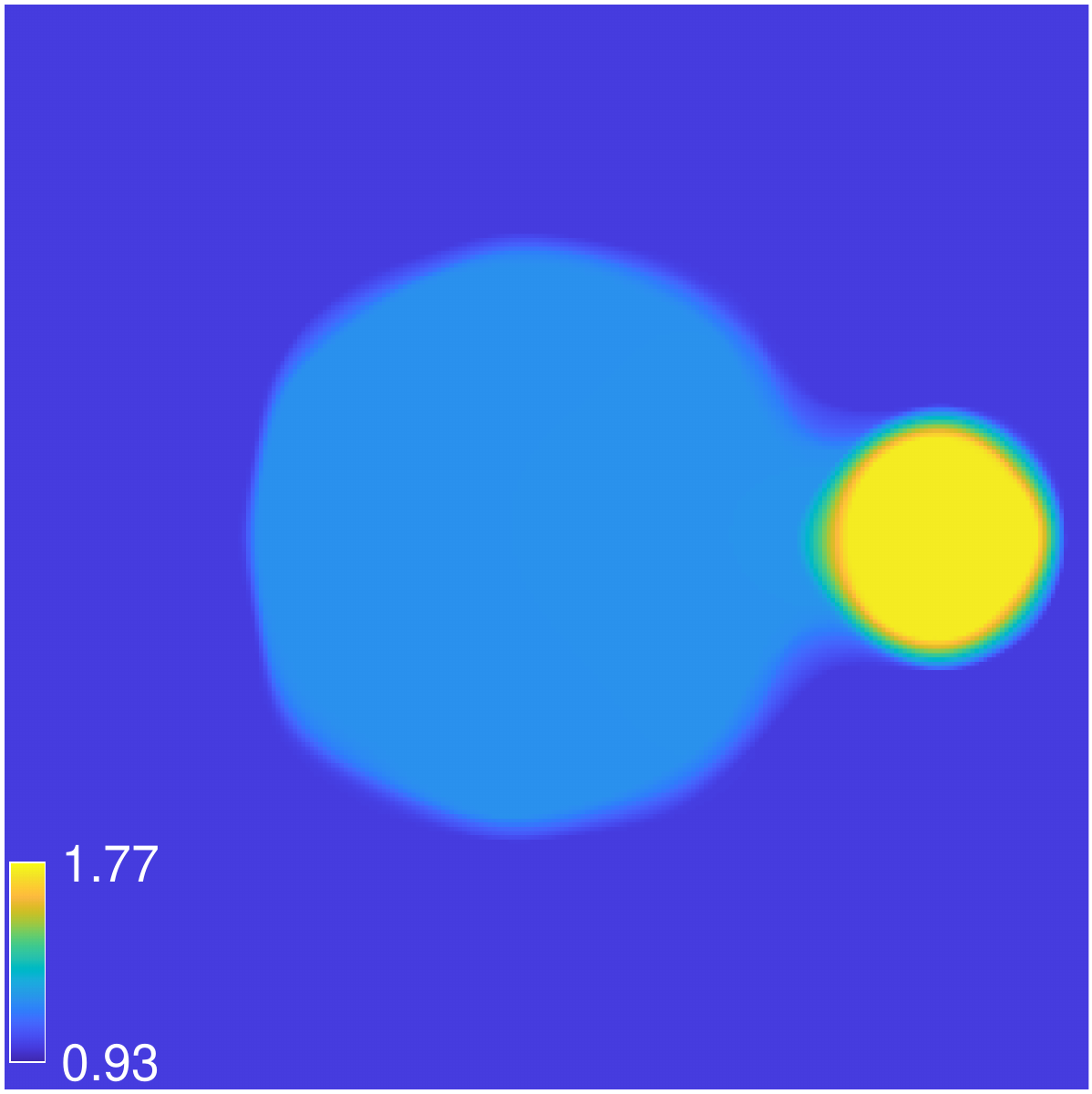}}
	\subfigure[MGH SNR: $26.72$dB.]{\includegraphics[scale=0.33]{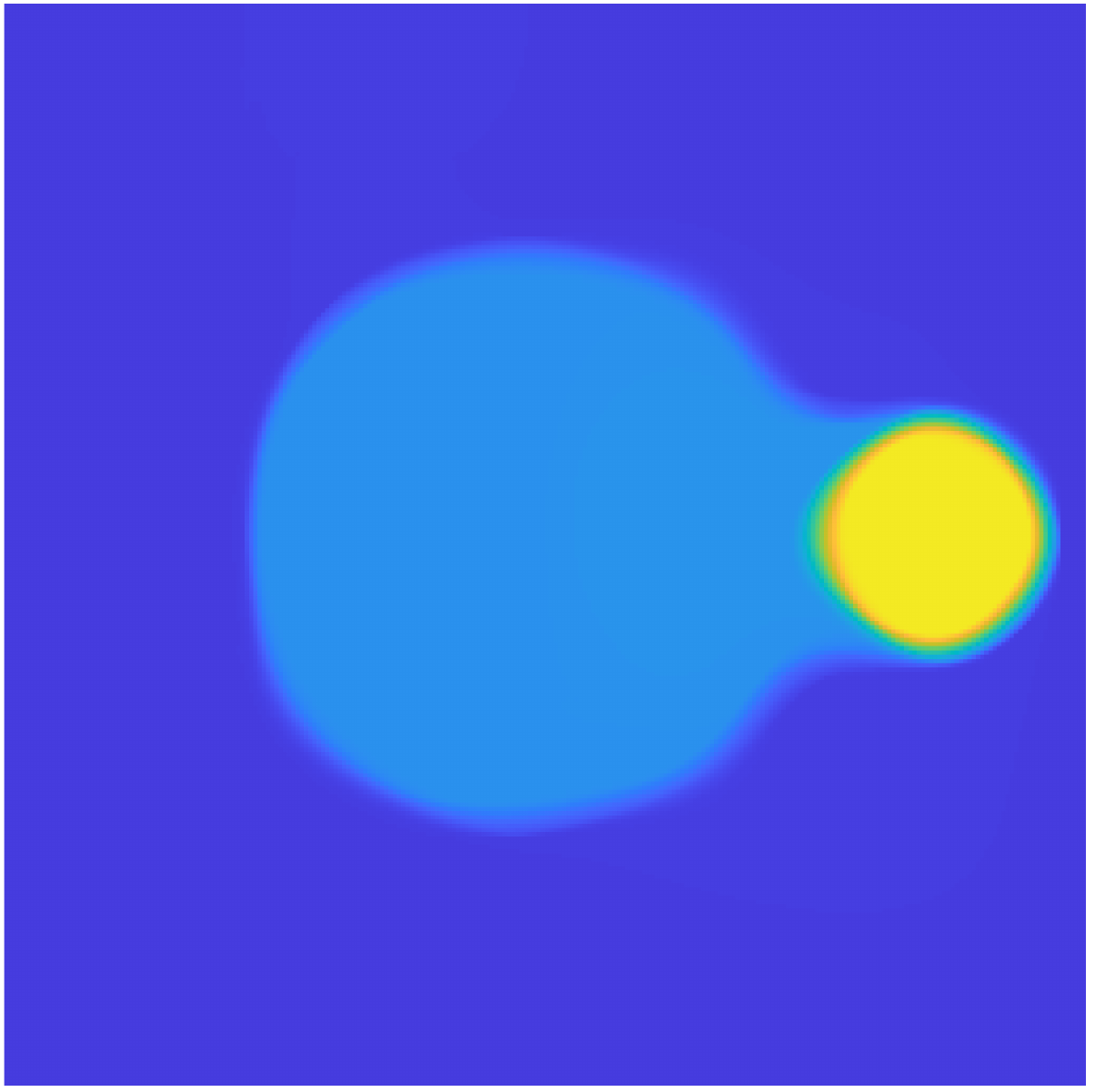}}\\
	\begin{tikzpicture}
 	\pgfplotsset{every axis legend/.append style={legend pos=south east,anchor=south east,font=\normalsize, legend cell align={left}}}
     \pgfplotsset{grid style={dotted, gray}}
 	\begin{groupplot}[enlargelimits=false,scale only axis, group style={group size=1 by 1,x descriptions at=edge bottom,group name=mygroup},
 	width=0.42*\textwidth,
 	height=0.1*\textwidth,
 	every axis/.append style={font=\normalsize,title style={anchor=base,yshift=-1mm}, x label style={yshift = 0.5em}, y label style={yshift = -.5em}, grid = both},xlabel = {Iterations}, ymax = 27,
 	]
	
     \nextgroupplot[ylabel={SNR (dB)}]
     \addplot[dashed,very thick,darkblue,line width=1pt] table [search path={figs/ExpCResultsMultiPlot},x expr=\coordindex, y=LiSSNR, col sep=comma] {ReconstructionFoamDielExtTMSNR.csv};
     \addplot[solid,very thick,red,line width=1pt] table [search path={figs/ExpCResultsMultiPlot},x expr=\coordindex, y=HelmholtzSNR, col sep=comma] {ReconstructionFoamDielExtTMSNR.csv};
     \legend{LiS,MGH};
     \end{groupplot}%
    \end{tikzpicture}
    \begin{tikzpicture}
  \begin{axis}[
    xbar,
    width=0.5*\textwidth,
	height=0.15*\textwidth,
	enlarge y limits=0.5,
    xlabel={CPU Time (seconds)},
    symbolic y coords={1},
    legend style={at={(0.5,-0.85)},
        anchor=north, legend columns=2},yticklabels={,,},
    ytick=data,
    xmin = 0,xmax = 790,
    nodes near coords, nodes near coords align={horizontal},
    every node near coord/.append style={
        /pgf/number format/fixed zerofill,
        /pgf/number format/precision=0
    }
    ]
    \addplot coordinates {(712.628010174,1)};
    \addplot coordinates {(436.8639624,1)};
    \legend{LiS ,MGH}
  \end{axis}
\end{tikzpicture}
		
\caption{Reconstruction of the LiS and Helmholtz models for the \emph{FoamDielExtTM} target.}
	\label{fig:RealDataRecov_FoamDielExtTM}
\end{figure}

\begin{figure}
	\centering
	\subfigure[LiS SNR: $27.99$dB.]{\includegraphics[scale=0.33]{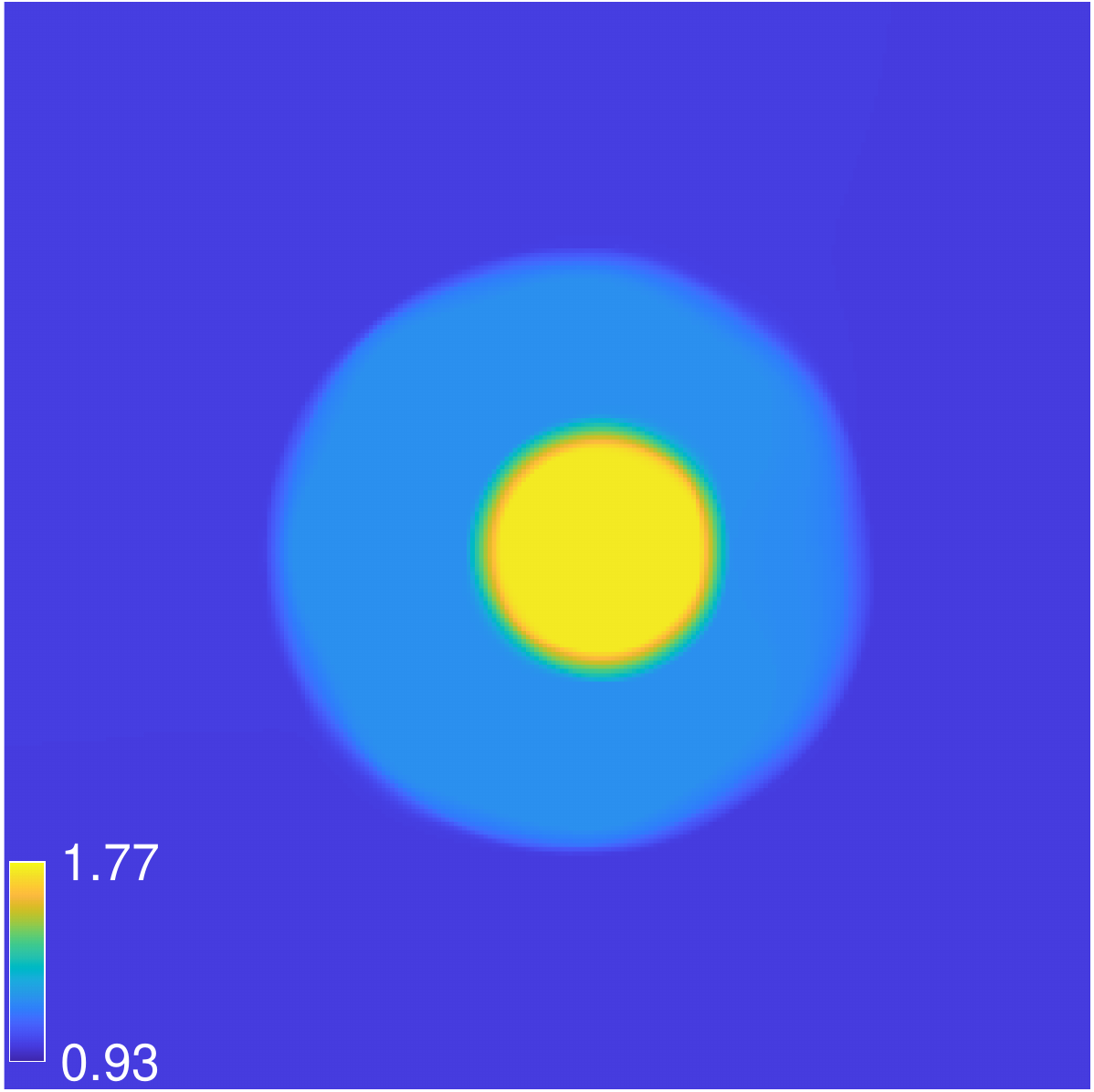}}
	\subfigure[MGH SNR: $28.04$dB.]{\includegraphics[scale=0.33]{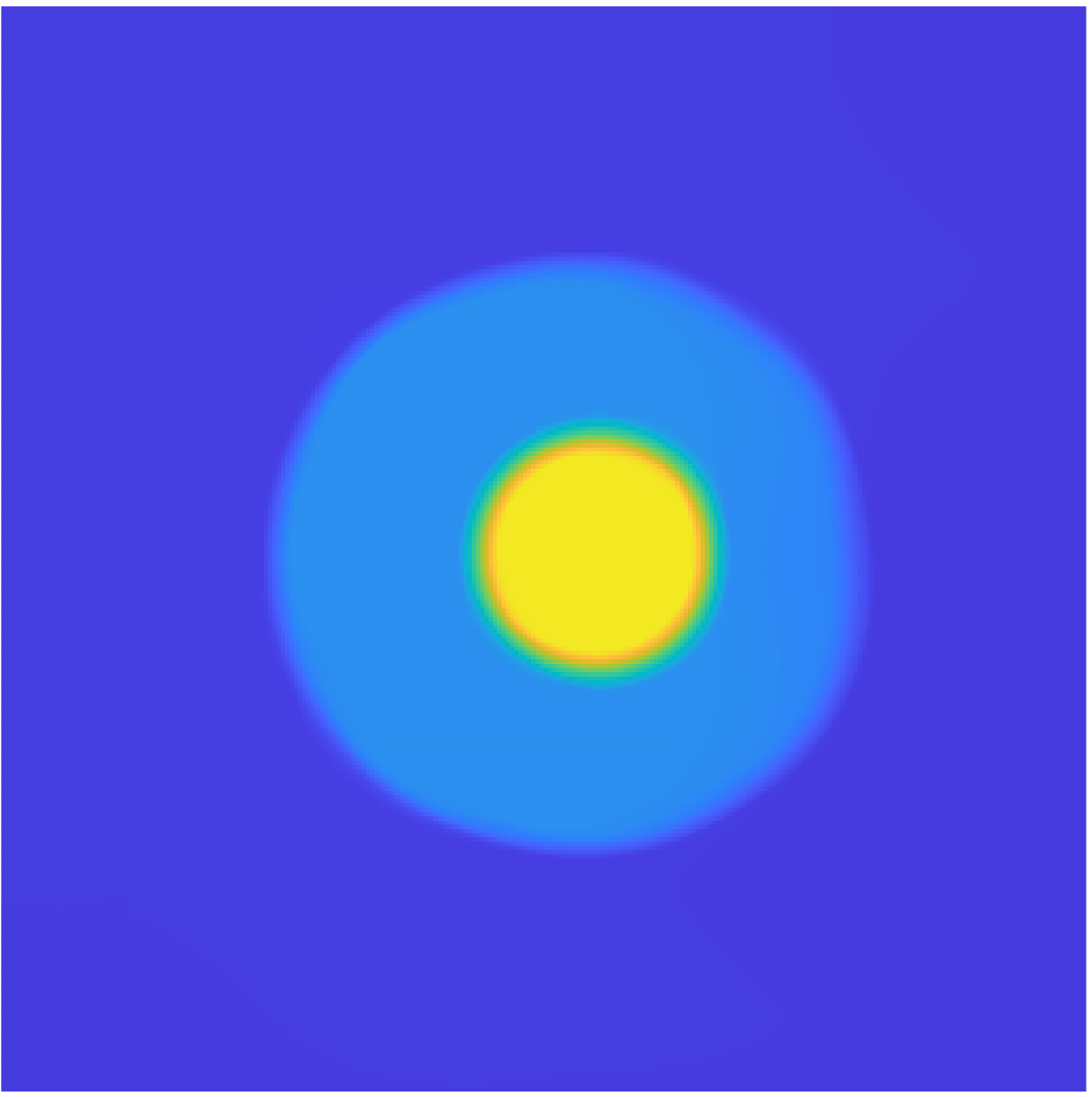}}\\
	\begin{tikzpicture}
 	\pgfplotsset{every axis legend/.append style={legend pos=south east,anchor=south east,font=\normalsize, legend cell align={left}}}
     \pgfplotsset{grid style={dotted, gray}}
 	\begin{groupplot}[enlargelimits=false,scale only axis, group style={group size=1 by 1,x descriptions at=edge bottom,group name=mygroup},
 	width=0.42*\textwidth,
 	height=0.1*\textwidth,
 	every axis/.append style={font=\normalsize,title style={anchor=base,yshift=-1mm}, x label style={yshift = 0.5em}, y label style={yshift = -.5em}, grid = both},xlabel = {Iterations}, ymax = 28.2,
 	]
	
     \nextgroupplot[ylabel={SNR (dB)}]
     \addplot[dashed,very thick,darkblue,line width=1pt] table [search path={figs/ExpCResultsMultiPlot},x expr=\coordindex, y=LiSSNR, col sep=comma] {ReconstructionFoamDielIntTMSNR.csv};
     \addplot[solid,very thick,red,line width=1pt] table [search path={figs/ExpCResultsMultiPlot},x expr=\coordindex, y=HelmholtzSNR, col sep=comma] {ReconstructionFoamDielIntTMSNR.csv};
     \legend{LiS,MGH};
     \end{groupplot}%
    \end{tikzpicture}
    \begin{tikzpicture}
  \begin{axis}[
    xbar,
    width=0.5*\textwidth,
	height=0.15*\textwidth,
	enlarge y limits=0.5,
    xlabel={CPU Time (seconds)},
    symbolic y coords={1},
    legend style={at={(0.5,-0.85)},
        anchor=north, legend columns=2},yticklabels={,,},
    ytick=data,
    xmin = 0,xmax = 790,
    nodes near coords, nodes near coords align={horizontal},
    every node near coord/.append style={
        /pgf/number format/fixed zerofill,
        /pgf/number format/precision=0
    }
    ]
    \addplot coordinates {(701.759519128,1)};
    \addplot coordinates {(432.046426703,1)};
    \legend{LiS ,MGH}
  \end{axis}
\end{tikzpicture}
		
\caption{Reconstruction of the LiS and Helmholtz models for the \emph{FoamDielintTM} target.}
	\label{fig:RealDataRecov_FoamDielintTM}
\end{figure}

\begin{figure}
	\centering
	\subfigure[LiS SNR: $22.14$dB.]{\includegraphics[scale=0.33]{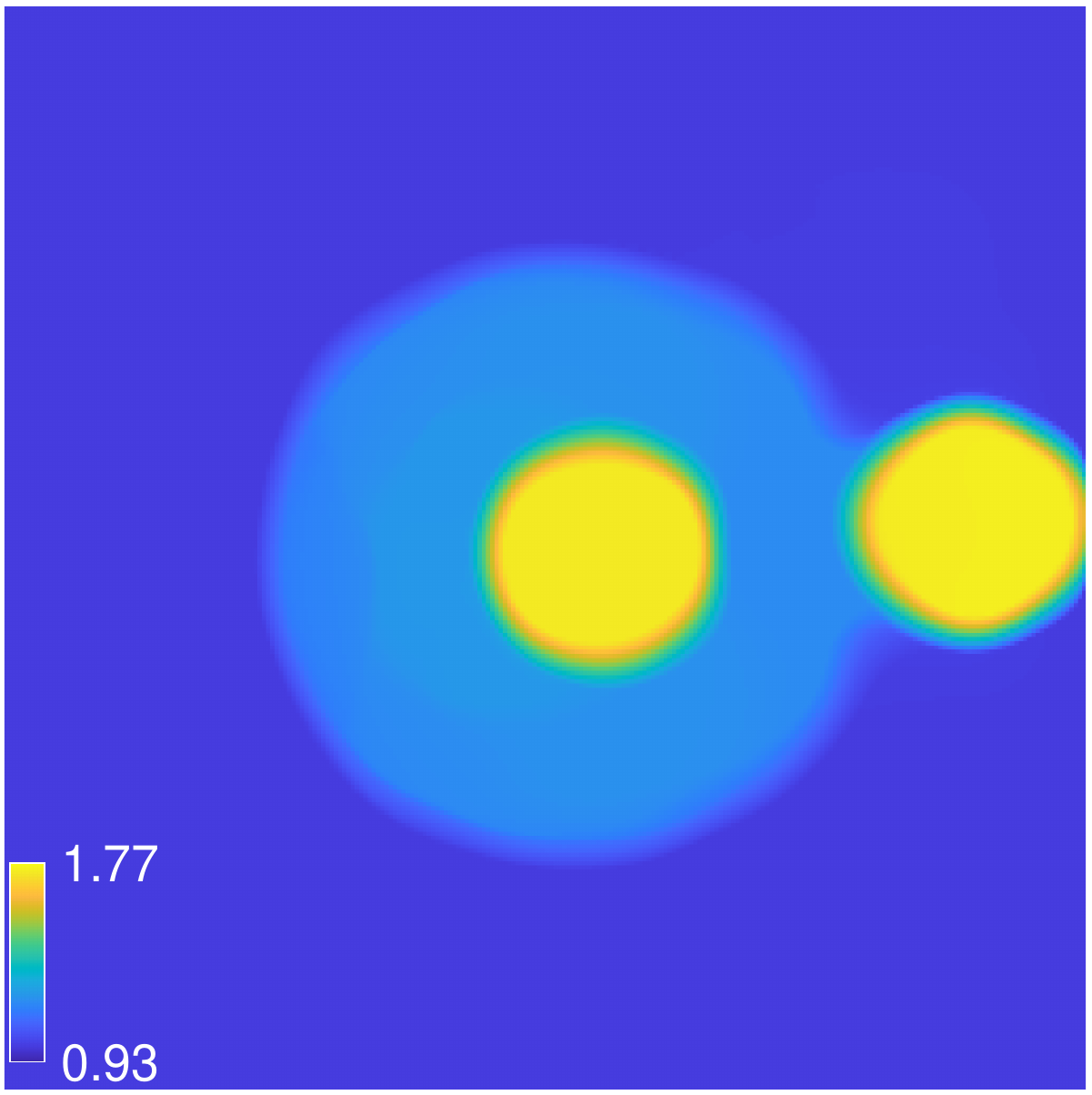}}
	\subfigure[MGH SNR: $22.3$dB.]{\includegraphics[scale=0.33]{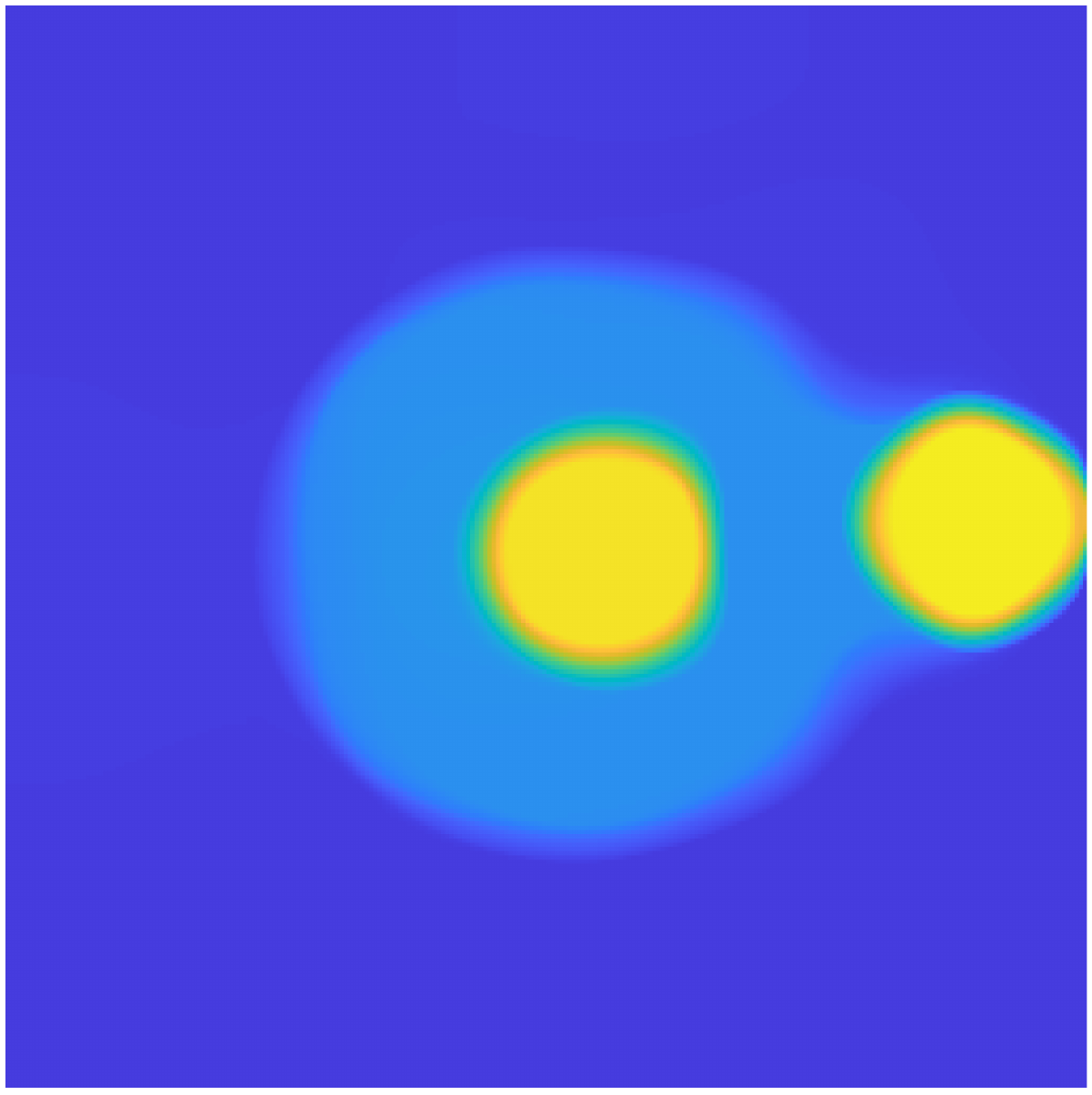}}\\
	\begin{tikzpicture}
 	\pgfplotsset{every axis legend/.append style={legend pos=south east,anchor=south east,font=\normalsize, legend cell align={left}}}
     \pgfplotsset{grid style={dotted, gray}}
 	\begin{groupplot}[enlargelimits=false,scale only axis, group style={group size=1 by 1,x descriptions at=edge bottom,group name=mygroup},
 	width=0.42*\textwidth,
 	height=0.1*\textwidth,
 	every axis/.append style={font=\normalsize,title style={anchor=base,yshift=-1mm}, x label style={yshift = 0.5em}, y label style={yshift = -.5em}, grid = both},xlabel = {Iterations}, ymax = 22.5,
 	]
	
     \nextgroupplot[ylabel={SNR (dB)}]
     \addplot[dashed,very thick,darkblue,line width=1pt] table [search path={figs/ExpCResultsMultiPlot},x expr=\coordindex, y=LiSSNR, col sep=comma] {ReconstructionFoamTwinDielTMSNR.csv};
     \addplot[solid,very thick,red,line width=1pt] table [search path={figs/ExpCResultsMultiPlot},x expr=\coordindex, y=HelmholtzSNR, col sep=comma] {ReconstructionFoamTwinDielTMSNR.csv};
     \legend{LiS,MGH};
     \end{groupplot}%
    \end{tikzpicture}
    \begin{tikzpicture}
  \begin{axis}[
    xbar,
    width=0.5*\textwidth,
	height=0.15*\textwidth,
	enlarge y limits=0.5,
    xlabel={CPU Time (seconds)},
    symbolic y coords={1},
    legend style={at={(0.5,-0.85)},
        anchor=north, legend columns=2},yticklabels={,,},
    ytick=data,
    xmin = 0,xmax = 1780,
    nodes near coords, nodes near coords align={horizontal},
    every node near coord/.append style={
        /pgf/number format/fixed zerofill,
        /pgf/number format/precision=0
    }
    ]
    \addplot coordinates {(1540.258306924,1)};
    \addplot coordinates {(998.597669278,1)};
    \legend{LiS ,MGH}
  \end{axis}
\end{tikzpicture}
		
\caption{Reconstruction of the LiS and Helmholtz models for the \emph{FoamTwinDielTM} target.}
	\label{fig:RealDataRecov_FoamTwinDielTM}
\end{figure}

\section{Conclusions}\label{sec:conclusion}
We have proposed an effective and robust multigrid solver for the Helmholtz equation.
We have shown that our method is adequate and efficient for diffraction tomography,
especially for strongly scattering samples.
This contrasts with Lippmann-Schwinger~(LiS) methods which suffer from slow convergence for such challenging cases.
Moreover, the proposed Jacobian matrix for the Helmholtz model is efficiently computed as well.
For future works, we plan to extend the Helmholtz model to the three-dimensional case, which presents some additional challenges as in the case of LiS.

\ifCLASSOPTIONcaptionsoff
  \newpage
\fi
\bibliographystyle{IEEEtran}
\bibliography{references}

\begin{thebibliography}{10}
\providecommand{\url}[1]{#1}
\csname url@samestyle\endcsname
\providecommand{\newblock}{\relax}
\providecommand{\bibinfo}[2]{#2}
\providecommand{\BIBentrySTDinterwordspacing}{\spaceskip=0pt\relax}
\providecommand{\BIBentryALTinterwordstretchfactor}{4}
\providecommand{\BIBentryALTinterwordspacing}{\spaceskip=\fontdimen2\font plus
\BIBentryALTinterwordstretchfactor\fontdimen3\font minus
  \fontdimen4\font\relax}
\providecommand{\BIBforeignlanguage}[2]{{%
\expandafter\ifx\csname l@#1\endcsname\relax
\typeout{** WARNING: IEEEtran.bst: No hyphenation pattern has been}%
\typeout{** loaded for the language `#1'. Using the pattern for}%
\typeout{** the default language instead.}%
\else
\language=\csname l@#1\endcsname
\fi
#2}}
\providecommand{\BIBdecl}{\relax}
\BIBdecl

\bibitem{wolf1969three}
E.~Wolf, ``Three-dimensional structure determination of semi-transparent
  objects from holographic data,'' \emph{Optics Communications}, vol.~1, no.~4,
  pp. 153--156, 1969.

\bibitem{jin2017tomographic}
D.~Jin, R.~Zhou, Z.~Yaqoob, and P.~T. So, ``{Tomographic phase microscopy:
  principles and applications in bioimaging},'' \emph{JOSA B}, vol.~34, no.~5,
  pp. B64--B77, 2017.

\bibitem{devaney1981inverse}
A.~Devaney, ``Inverse-scattering theory within the {R}ytov approximation,''
  \emph{Optics Letters}, vol.~6, no.~8, pp. 374--376, 1981.

\bibitem{sung2009optical}
Y.~Sung, W.~Choi, C.~Fang-Yen, K.~Badizadegan, R.~R. Dasari, and M.~S. Feld,
  ``Optical diffraction tomography for high resolution live cell imaging,''
  \emph{Optics Express}, vol.~17, no.~1, pp. 266--277, 2009.

\bibitem{lim2015comparative}
J.~Lim, K.~Lee, K.~H. Jin, S.~Shin, S.~Lee, Y.~Park, and J.~C. Ye,
  ``Comparative study of iterative reconstruction algorithms for missing cone
  problems in optical diffraction tomography,'' \emph{Optics Express}, vol.~23,
  no.~13, pp. 16\,933--16\,948, 2015.

\bibitem{yang2020deep}
F.~Yang, T.-a. Pham, H.~Gupta, M.~Unser, and J.~Ma, ``Deep-learning projector
  for optical diffraction tomography,'' \emph{Optics Express}, vol.~28, no.~3,
  pp. 3905--3921, 2020.

\bibitem{dubois2005retrieval}
A.~Dubois, K.~Belkebir, and M.~Saillard, ``Retrieval of inhomogeneous targets
  from experimental frequency diversity data,'' \emph{Inverse Problems},
  vol.~21, no.~6, p. S65, 2005.

\bibitem{chaumet2009three}
P.~C. Chaumet and K.~Belkebir, ``Three-dimensional reconstruction from real
  data using a conjugate gradient-coupled dipole method,'' \emph{Inverse
  Problems}, vol.~25, no.~2, p. 024003, 2009.

\bibitem{mudry2012electromagnetic}
E.~Mudry, P.~C. Chaumet, K.~Belkebir, and A.~Sentenac, ``Electromagnetic wave
  imaging of three-dimensional targets using a hybrid iterative inversion
  method,'' \emph{Inverse Problems}, vol.~28, no.~6, p. 065007, 2012.

\bibitem{kamilov2015learning}
U.~S. Kamilov, I.~N. Papadopoulos, M.~H. Shoreh, A.~Goy, C.~Vonesch, M.~Unser,
  and D.~Psaltis, ``Learning approach to optical tomography,'' \emph{Optica},
  vol.~2, no.~6, pp. 517--522, 2015.

\bibitem{kamilov2016optical}
------, ``Optical tomographic image reconstruction based on beam propagation
  and sparse regularization,'' \emph{IEEE Transactions on Computational
  Imaging}, vol.~2, no.~1, pp. 59--70, 2016.

\bibitem{lim2018learning}
J.~Lim, A.~Goy, M.~H. Shoreh, M.~Unser, and D.~Psaltis, ``Learning tomography
  assessed using {M}ie theory,'' \emph{Physical Review Applied}, vol.~9, no.~3,
  p. 034027, 2018.

\bibitem{lim2019high}
J.~Lim, A.~B. Ayoub, E.~E. Antoine, and D.~Psaltis, ``High-fidelity optical
  diffraction tomography of multiple scattering samples,'' \emph{Light: Science
  \& Applications}, vol.~8, no.~1, p.~82, 2019.

\bibitem{liu2017seagle}
H.-Y. Liu, D.~Liu, H.~Mansour, P.~T. Boufounos, L.~Waller, and U.~S. Kamilov,
  ``{SEAGLE}: Sparsity-driven image reconstruction under multiple scattering,''
  \emph{IEEE Transactions on Computational Imaging}, vol.~4, no.~1, pp. 73--86,
  2017.

\bibitem{soubies2017efficient}
E.~Soubies, T.-a. Pham, and M.~Unser, ``Efficient inversion of
  multiple-scattering model for optical diffraction tomography,'' \emph{Optics
  Express}, vol.~25, no.~18, pp. 21\,786--21\,800, 2017.

\bibitem{pham2020three}
T.-a. Pham, E.~Soubies, A.~Ayoub, J.~Lim, D.~Psaltis, and M.~Unser,
  ``Three-dimensional optical diffraction tomography with
  {L}ippmann-{S}chwinger model,'' \emph{IEEE Transactions on Computational
  Imaging}, vol.~6, pp. 727--738, 2020.

\bibitem{kadu2020high}
A.~Kadu, H.~Mansour, and P.~T. Boufounos, ``High-contrast reflection tomography
  with total-variation constraints,'' \emph{IEEE Transactions on Computational
  Imaging}, vol.~6, pp. 1523--1536, 2020.

\bibitem{draine1994discrete}
B.~T. Draine and P.~J. Flatau, ``Discrete-dipole approximation for scattering
  calculations,'' \emph{Journal of the Optical Society of America A}, vol.~11,
  no.~4, pp. 1491--1499, 1994.

\bibitem{girard2010nanometric}
J.~Girard, G.~Maire, H.~Giovannini, A.~Talneau, K.~Belkebir, P.~C. Chaumet, and
  A.~Sentenac, ``Nanometric resolution using far-field optical tomographic
  microscopy in the multiple scattering regime,'' \emph{Physical Review A},
  vol.~82, no.~6, p. 061801, 2010.

\bibitem{zhang2016far}
T.~Zhang, C.~Godavarthi, P.~C. Chaumet, G.~Maire, H.~Giovannini, A.~Talneau,
  M.~Allain, K.~Belkebir, and A.~Sentenac, ``Far-field diffraction microscopy
  at $\lambda$/10 resolution,'' \emph{Optica}, vol.~3, no.~6, pp. 609--612,
  2016.

\bibitem{ying2015sparsifying}
L.~Ying, ``Sparsifying preconditioner for the {L}ippmann--{S}chwinger
  equation,'' \emph{Multiscale Modeling \& Simulation}, vol.~13, no.~2, pp.
  644--660, 2015.

\bibitem{liu2018sparsify}
F.~Liu and L.~Ying, ``Sparsify and sweep: An efficient preconditioner for the
  {L}ippmann--{S}chwinger equation,'' \emph{SIAM Journal on Scientific
  Computing}, vol.~40, no.~2, pp. B379--B404, 2018.

\bibitem{sommerfeld1949partial}
A.~Sommerfeld, \emph{Partial Differential Equations in Physics}.\hskip 1em plus
  0.5em minus 0.4em\relax Academic Press, 1949.

\bibitem{schmalz2010derivation}
J.~A. Schmalz, G.~Schmalz, T.~E. Gureyev, and K.~M. Pavlov, ``On the derivation
  of the {G}reen’s function for the {H}elmholtz equation using generalized
  functions,'' \emph{American Journal of Physics}, vol.~78, no.~2, pp.
  181--186, 2010.

\bibitem{van1992bi}
H.~A. Van~der Vorst, ``{Bi-CGSTAB}: A fast and smoothly converging variant of
  bi-{CG} for the solution of nonsymmetric linear systems,'' \emph{SIAM Journal
  on Scientific and Statistical Computing}, vol.~13, no.~2, pp. 631--644, 1992.

\bibitem{vainikko2000fast}
\BIBentryALTinterwordspacing
G.~Vainikko, ``Fast solvers of the lippmann-schwinger equation,'' in
  \emph{Direct and Inverse Problems of Mathematical Physics}, R.~P. Gilbert,
  J.~Kajiwara, and Y.~S. Xu, Eds.\hskip 1em plus 0.5em minus 0.4em\relax
  Boston, MA: Springer US, 2000, pp. 423--440. [Online]. Available:
  \url{https://doi.org/10.1007/978-1-4757-3214-6_25}
\BIBentrySTDinterwordspacing

\bibitem{vico2016fast}
\BIBentryALTinterwordspacing
F.~Vico, L.~Greengard, and M.~Ferrando, ``Fast convolution with free-space
  green's functions,'' \emph{Journal of Computational Physics}, vol. 323, pp.
  191--203, 2016. [Online]. Available:
  \url{https://www.sciencedirect.com/science/article/pii/S0021999116303230}
\BIBentrySTDinterwordspacing

\bibitem{brandt1977multi}
A.~Brandt, ``Multi-level adaptive solutions to boundary-value problems,''
  \emph{Mathematics of Computation}, vol.~31, no. 138, pp. 333--390, 1977.

\bibitem{brandt1994rigorous}
------, ``Rigorous quantitative analysis of multigrid, i. {C}onstant
  coefficients two-level cycle with ${L}_2$-norm,'' \emph{SIAM Journal on
  Numerical Analysis}, vol.~31, no.~6, pp. 1695--1730, 1994.

\bibitem{trottenberg2000multigrid}
U.~Trottenberg, C.~W. Oosterlee, and A.~Schuller, \emph{Multigrid}.\hskip 1em
  plus 0.5em minus 0.4em\relax Academic Press, 2000.

\bibitem{xu2017algebraic}
J.~Xu and L.~Zikatanov, ``Algebraic multigrid methods,'' \emph{Acta Numerica},
  vol.~26, pp. 591--721, 2017.

\bibitem{briggs2000multigrid}
\BIBentryALTinterwordspacing
W.~L. Briggs, V.~E. Henson, and S.~F. McCormick, \emph{A Multigrid Tutorial,
  Second Edition}, 2nd~ed.\hskip 1em plus 0.5em minus 0.4em\relax Society for
  Industrial and Applied Mathematics, 2000. [Online]. Available:
  \url{https://epubs.siam.org/doi/abs/10.1137/1.9780898719505}
\BIBentrySTDinterwordspacing

\bibitem{elman2001multigrid}
H.~C. Elman, O.~G. Ernst, and D.~P. O'leary, ``A multigrid method enhanced by
  {K}rylov subspace iteration for discrete {H}elmholtz equations,'' \emph{SIAM
  Journal on Scientific Computing}, vol.~23, no.~4, pp. 1291--1315, 2001.

\bibitem{ernst2012difficult}
\BIBentryALTinterwordspacing
O.~G. Ernst and M.~J. Gander, ``Why it is difficult to solve {H}elmholtz
  problems with classical iterative methods,'' in \emph{Numerical Analysis of
  Multiscale Problems}, I.~G. Graham, T.~Y. Hou, O.~Lakkis, and R.~Scheichl,
  Eds.\hskip 1em plus 0.5em minus 0.4em\relax Berlin, Heidelberg: Springer
  Berlin Heidelberg, 2012, pp. 325--363. [Online]. Available:
  \url{https://doi.org/10.1007/978-3-642-22061-6_10}
\BIBentrySTDinterwordspacing

\bibitem{wienands2004practical}
R.~Wienands and W.~Joppich, \emph{Practical Fourier Analysis for Multigrid
  Methods}.\hskip 1em plus 0.5em minus 0.4em\relax Chapman and Hall/CRC, 2004.

\bibitem{yavneh1998coarse}
I.~Yavneh, ``Coarse-grid correction for nonelliptic and singular perturbation
  problems,'' \emph{SIAM Journal on Scientific Computing}, vol.~19, no.~5, pp.
  1682--1699, 1998.

\bibitem{brandt1997wave}
\BIBentryALTinterwordspacing
A.~Brandt and I.~Livshits, ``\BIBforeignlanguage{eng}{Wave-ray multigrid method
  for standing wave equations.}'' \emph{\BIBforeignlanguage{eng}{ETNA.
  Electronic Transactions on Numerical Analysis [electronic only]}}, vol.~6,
  pp. 162--181, 1997. [Online]. Available: \url{http://eudml.org/doc/119506}
\BIBentrySTDinterwordspacing

\bibitem{stolk2014multigrid}
C.~C. Stolk, M.~Ahmed, and S.~K. Bhowmik, ``A multigrid method for the
  {H}elmholtz equation with optimized coarse grid corrections,'' \emph{SIAM
  Journal on Scientific Computing}, vol.~36, no.~6, pp. A2819--A2841, 2014.

\bibitem{erlangga2006novel}
Y.~A. Erlangga, C.~W. Oosterlee, and C.~Vuik, ``A novel multigrid based
  preconditioner for heterogeneous {H}elmholtz problems,'' \emph{SIAM Journal
  on Scientific Computing}, vol.~27, no.~4, pp. 1471--1492, 2006.

\bibitem{calandra2013improved}
H.~Calandra, S.~Gratton, X.~Pinel, and X.~Vasseur, ``An improved two-grid
  preconditioner for the solution of three-dimensional {H}elmholtz problems in
  heterogeneous media,'' \emph{Numerical Linear Algebra with Applications},
  vol.~20, no.~4, pp. 663--688, 2013.

\bibitem{treister2019multigrid}
E.~Treister and E.~Haber, ``A multigrid solver to the {H}elmholtz equation with
  a point source based on travel time and amplitude,'' \emph{Numerical Linear
  Algebra with Applications}, vol.~26, no.~1, p. e2206, 2019.

\bibitem{rudin1992nonlinear}
\BIBentryALTinterwordspacing
L.~I. Rudin, S.~Osher, and E.~Fatemi, ``Nonlinear total variation based noise
  removal algorithms,'' \emph{Physica D: Nonlinear Phenomena}, vol.~60, no.~1,
  pp. 259--268, 1992. [Online]. Available:
  \url{https://www.sciencedirect.com/science/article/pii/016727899290242F}
\BIBentrySTDinterwordspacing

\bibitem{lefkimmiatis2013hessian}
S.~Lefkimmiatis, J.~P. Ward, and M.~Unser, ``Hessian {S}chatten-norm
  regularization for linear inverse problems,'' \emph{IEEE Transactions on
  Image Processing}, vol.~22, no.~5, pp. 1873--1888, 2013.

\bibitem{kamilov2017plug}
U.~S. Kamilov, H.~Mansour, and B.~Wohlberg, ``A plug-and-play priors approach
  for solving nonlinear imaging inverse problems,'' \emph{IEEE Signal
  Processing Letters}, vol.~24, no.~12, pp. 1872--1876, 2017.

\bibitem{hong2020solving}
T.~Hong, I.~Yavneh, and M.~Zibulevsky, ``Solving {RED} with weighted proximal
  methods,'' \emph{IEEE Signal Processing Letters}, vol.~27, pp. 501--505,
  2020.

\bibitem{pham2020adaptive}
T.-a. Pham, E.~Soubies, A.~Ayoub, D.~Psaltis, and M.~Unser, ``Adaptive
  regularization for three-dimensional optical diffraction tomography,'' in
  \emph{Proceedings of the Seventeenth IEEE International Symposium on
  Biomedical Imaging ({ISBI'20})}, Iowa City IA, USA, April 5-7, 2020, pp.
  182--186.

\bibitem{beck2009fast}
A.~Beck and M.~Teboulle, ``A fast iterative shrinkage-thresholding algorithm
  for linear inverse problems,'' \emph{SIAM Journal on Imaging Sciences},
  vol.~2, no.~1, pp. 183--202, 2009.

\bibitem{nesterov2013gradient}
Y.~Nesterov, ``Gradient methods for minimizing composite functions,''
  \emph{Mathematical Programming}, vol. 140, no.~1, pp. 125--161, 2013.

\bibitem{beck2009fastTV}
A.~Beck and M.~Teboulle, ``Fast gradient-based algorithms for constrained total
  variation image denoising and deblurring problems,'' \emph{IEEE Transactions
  on Image Processing}, vol.~18, no.~11, pp. 2419--2434, 2009.

\bibitem{soubies2019pocket}
E.~Soubies, F.~Soulez, M.~T. McCann, T.-a. Pham, L.~Donati, T.~Debarre,
  D.~Sage, and M.~Unser, ``Pocket guide to solve inverse problems with
  {G}lobal{B}io{I}m,'' \emph{Inverse Problems}, vol.~35, no.~10, p. 104006,
  2019.

\bibitem{knibbe2011gpu}
H.~Knibbe, C.~W. Oosterlee, and C.~Vuik, ``{GPU} implementation of a
  {H}elmholtz {K}rylov solver preconditioned by a shifted {L}aplace multigrid
  method,'' \emph{Journal of Computational and Applied Mathematics}, vol. 236,
  no.~3, pp. 281--293, 2011.

\bibitem{devaney2012mathematical}
A.~J. Devaney, \emph{Mathematical Foundations of Imaging, Tomography and
  Wavefield Inversion}.\hskip 1em plus 0.5em minus 0.4em\relax Cambridge
  University Press, 2012.

\bibitem{geffrin2005free}
J.-M. Geffrin, P.~Sabouroux, and C.~Eyraud, ``Free space experimental
  scattering database continuation: {E}xperimental set-up and measurement
  precision,'' \emph{Inverse Problems}, vol.~21, no.~6, p. S117, 2005.

\end{thebibliography}

%






\end{document}